\documentclass[a4paper,UKenglish,numberwithinsect, cleveref, thm-restate]{lipics-v2021}
\usepackage{amsfonts}
\usepackage{amssymb}
\usepackage{amstext}
\usepackage{amsmath}
\usepackage{xspace}
\usepackage{graphicx}

\usepackage{graphics}
\usepackage{colordvi}
\usepackage{color}
\usepackage{hyperref}

\usepackage{bm}
\usepackage{mathtools}
\usepackage{nicematrix}
\usepackage{svg}
\usepackage[ruled,vlined]{algorithm2e}

\usepackage{thmtools}

\newcommand{\tw}{\mathsf{tw}}
\newcommand{\pw}{\mathsf{pw}}

\newcommand{\lset}{{\mathcal L}}
 
\newcommand{\cset}{{\mathcal C}}




\newcommand{\opt}{{\sf OPT}\xspace}

\bibliographystyle{plainurl}

\title{Polynomial-time Approximation of Independent Set Parameterized by Treewidth}
\titlerunning{Polynomial-time Approximation of Independent Set Parameterized by Treewidth}

\acknowledgements{The research presented in this paper was initiated partially during the
trimester on Discrete Optimization at Hausdorff Research Institute for
Mathematics (HIM) in Bonn, Germany.}

\author{Parinya Chalermsook}{Aalto University, Finland}{parinya.chalermsook@aalto.fi}{}{Supported by European Research Council (ERC) under the European Union's Horizon 2020 research and innovation programme
(grant agreement No 759557).}
\author{Fedor Fomin}{University of Bergen, Norway}{fedor.fomin@uib.no}{}{Supported by the Research Council of
Norway via the project BWCA (grant no. 314528).}
\author{Thekla Hamm}{Utrecht University, The Netherlands}{thekla.hamm@gmail.com}{}{Supported by the Austrian Science Fund (FWF, project J4651-N)}
\author{Tuukka Korhonen}{University of Bergen, Norway}{tuukka.korhonen@uib.no}{}{Supported by the Research Council of
Norway via the project BWCA (grant no. 314528).}
\author{Jesper Nederlof}{Utrecht University, The Netherlands}{j.nederlof@uu.nl}{}{Supported by the project CRACKNP that has received funding from the European Research Council (ERC) under the European Union’s Horizon 2020 research and innovation programme (grant agreement No 853234).}
\author{Ly Orgo}{Aalto University, Finland}{ly.orgo@aalto.fi}{}{Supported by European Research Council (ERC) under the European Union's Horizon 2020 research and innovation programme
(grant agreement No 759557).}

\authorrunning{P.\ Chalermsook, F.\ Fomin, T.\ Hamm, T.\ Korhonen, J.\ Nederlof, L.\ Orgo} 

\Copyright{Parinya Chalermsook, Fedor Fomin, Thekla Hamm, Tuukka Korhonen, Jesper Nederlof, and Ly Orgo} 

\ccsdesc[500]{Theory of computation~Approximation algorithms analysis}
\ccsdesc[500]{Theory of computation~Graph algorithms analysis}

\keywords{Maximum Independent Set, Treewidth, Approximation Algorithms, Parameterized Approximation} 

\category{} 

\relatedversion{} 




\nolinenumbers 

\EventEditors{Inge Li G{\o}rtz, Martin Farach-Colton, Simon J. Puglisi, and Grzegorz Herman}
\EventNoEds{4}
\EventLongTitle{31st Annual European Symposium on Algorithms (ESA 2023)}
\EventShortTitle{ESA 2023}
\EventAcronym{ESA}
\EventYear{2023}
\EventDate{September 4--6, 2023}
\EventLocation{Amsterdam, the Netherlands}
\EventLogo{}
\SeriesVolume{274}
\ArticleNo{80}

\hideLIPIcs

\begin{document}

\maketitle

\begin{abstract}
We prove the following result about approximating the maximum independent set in a graph.
Informally, we show that any approximation algorithm with a ``non-trivial'' approximation ratio (as a function of the number of vertices of the input graph $G$) can be turned into an approximation algorithm achieving almost the same ratio, albeit as a function of the treewidth of $G$. More formally, we prove that for any function $f$, the existence of a polynomial time $(n/f(n))$-approximation algorithm yields the existence of a polynomial time $O(\tw \cdot\log{f(\tw)}/f(\tw))$-approximation algorithm, where $n$ and $\tw$ denote the number of vertices and the width of a given tree decomposition of the input graph.  
By pipelining our result with the state-of-the-art $O(n \cdot (\log \log n)^2/\log^3 n)$-approximation algorithm by Feige (2004), this implies an $O(\tw \cdot (\log \log \tw)^3/\log^3 \tw)$-approximation algorithm.
\end{abstract}

\newpage

\section{Introduction}
An independent set of a graph is a subset of pairwise non-adjacent vertices.
The Maximum Independent Set problem, which asks to find an independent set of maximum cardinality of a given input graph on $n$ vertices, has been among the most fundamental optimization problems  that appeared in many research areas of computer science and has been a canonical problem of study in algorithms.

In the field of \emph{approximation algorithms}, the problem is notoriously hard: It has no $O(n/2^{\log^{3/4} n})$-approximation algorithm running in polynomial time unless $NP$ can be solved in randomized quasi-polynomial time by the work of Khot and Ponnuswami~\cite{khot2006better} (building on earlier work by among others Håstad~\cite{hastad1996clique}).
The best known polynomial time approximation algorithm is an $\tilde{O}(n/\log^3 n)$-approximation by Feige~\cite{feige2004approximating}, which is almost twenty years old; here the $\tilde{O}$-notation hides factors polynomial in $\log \log n$.

Besides measuring the approximation ratio as a function of $n$, two other directions have been suggested in the literature.
One of the directions is to measure the ratio as a function of the maximum degree $d$ of the input graph.
The first improvement over the naive greedy $(d+1)$-approximation to $o(d)$ was given by Halldorsson and Radhakrishnan~\cite{njc/HalldorssonR94} in 1994.
After this, several improvements to this approximation were made~\cite{DBLP:journals/mp/AlonK98,DBLP:journals/jgaa/Halldorsson00,DBLP:conf/soda/Halperin00}, culminating in the currently best $\tilde{O}(d/\log^{1.5} d)$-approximation by Bansal, Gupta, and Guruganesh~\cite{BansalGG18} with an almost matching lower bound of $\Omega(d/\log^2 d)$ under the Unique Games Conjecture (UGC) by Austrin, Khot, and Safra~\cite{toc/AustrinKS11}; here the $\tilde{O}$-notation hides factors polynomial in $\log \log d$.

Another direction is to measure the approximation ratio as a function of the \emph{treewidth} of the input graph.
Here, a simple greedy algorithm that is based on the fact that graph of treewidth $\tw$ are $\tw$-degenerate (see Lemma~\ref{simple_n_k_IS}) achieves an approximation ratio of $(\tw+1)$.
This was improved by Czumaj, Halld{\'o}rsson, Lingas, and Nilsson~\cite{czumaj2005approximation} in 2005, who gave a $(\tw/ \log n)$-approximation algorithm when a tree decomposition of width $\tw$ is given with the input graph.
Their algorithm is quite elegant and follows easily from the observation that one can greedily partition the vertices of the graph into sets $V_1,\ldots,V_r$ such that the treewidth of $G[V_i]$ is at most $\tw/r$.
Combined with dynamic programming for independent set on graphs of bounded treewidth, this gives a $2^{\tw/r} n^{O(1)}$ time $r$-approximation for any $r$, and therefore runs in polynomial time when we set $r=\tw/\log n$, resulting in the $(\tw/ \log n)$-approximation algorithm.

Contrary to the degree-direction of approximating independent set, 
there has been no progress in the two other directions measuring the approximation ratio as a function on the number of vertices or the treewidth since the milestone results of Feige~\cite{feige2004approximating} and Czumaj et al.~\cite{czumaj2005approximation}.
It is easy to show that one cannot improve the result of Czumaj et al.~\cite{czumaj2005approximation} to a polynomial time $(\tw/(f(\tw)\log n))$-approximation for any diverging positive function $f$, assuming the Exponential Time Hypothesis (ETH).
In particular, given an input graph $G$ on $n_0$ vertices we can create a graph $G'$ on $n=2^{n_0 / f(n_0)}$ vertices by adding $n-n_0$ vertices of degree $0$. Then $G'$ has treewidth $n_0$ and the assumed algorithm is a $n^{O(1)}=2^{o(n_0)}$-time $r$-approximation for $r=n_0/(f(n_0) \log n)=1$, which violates the lower bound that Maximum Independent Set cannot be solved exactly in $2^{o(n_0)}$ time on graphs with $n_0$ vertices, assuming ETH (see e.g. for an equivalent lower bound for Vertex Cover~\cite[Theorem 14.6]{DBLP:books/sp/CyganFKLMPPS15} ) .

This ETH lower bound naturally brings us to the question of what is the best approximation ratio in terms of treewidth only.
In this paper, we essentially resolve this question by relating the approximation ratio parameterized by treewidth tightly to the approximation ratio parameterized by $n$.

Formally, as our main result we prove the following theorem:

\begin{restatable}{theorem}{maintheoremrestate}
\label{thm: main-intro} 
Let $f: {\mathbb N} \rightarrow {\mathbb N}$ be a function such that there exists an $\frac{n}{f(n)}$-approximation algorithm for Maximum Independent Set, where $n$ is the number of vertices of the input graph\footnote{We make mild assumptions on the properties of $f$, which are detailed in \Cref{sec: prelim}. Any ``reasonable'' function $f$ satisfies these assumptions.}.
Then there exists an $O\left(\frac{\tw \cdot \log {f(\tw)}}{f(\tw)}\right)$-approximation algorithm for Maximum Independent Set, where $\tw$ is the width of a given tree decomposition of the input graph.
\end{restatable}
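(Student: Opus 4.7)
The plan is to combine the partition technique of Czumaj et al.\ with the given $n/f(n)$-approximation. Specifically, I would use the Czumaj et al.\ construction to partition $V(G)$ into parts $V_1, \ldots, V_r$ such that each induced subgraph $G[V_i]$ has treewidth at most $\lceil \tw/r \rceil$, but substitute the exact dynamic programming of Czumaj et al.\ with the provided $n/f(n)$-approximation on each piece. The hope is that this exchange of the ``inner algorithm'' for a stronger approximation hypothesis will yield a quantitatively better guarantee depending on $\tw$ rather than $n$.

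Concretely, I would first apply the Czumaj et al.\ partition with parameter $r$, then on each piece $G[V_i]$ run the $n/f(n)$-approximation to obtain an independent set $I_i$ of size at least $\opt(G[V_i]) \cdot f(|V_i|)/|V_i|$, and finally output the largest $I_i$. In the analysis, I would combine the pigeonhole lower bound $\max_i \opt(G[V_i]) \ge \opt/r$ with the degeneracy-based bound $\opt(G[V_i]) \ge |V_i|/(\lceil \tw/r\rceil + 1)$ from Lemma~\ref{simple_n_k_IS}, to derive a lower bound on the size of the output.

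A single value of $r$ will not suffice because the resulting bound depends on $|V_i|$, and a generic Czumaj partition does not control piece sizes (some $|V_i|$ may be almost $n$). To remove this dependence, I would enumerate a logarithmic number of candidate values of $r$ (for instance, dyadic values spanning an appropriate range of $[1,\tw]$), run the above procedure for each, and output the best independent set found over all of them. A careful case analysis, invoking the mild regularity conditions on $f$ deferred to \Cref{sec: prelim}, should show that for every instance at least one of the enumerated $r$'s balances the pigeonhole and degeneracy bounds well enough to produce an independent set of size $\Omega(\opt \cdot f(\tw)/(\tw\log f(\tw)))$; the $\log f(\tw)$ factor in the target ratio then arises naturally from taking the best over this enumeration.

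The principal difficulty I anticipate is in this final parameter-selection step: verifying that the enumeration is guaranteed to include a good $r$ for every combination of $n$, $\tw$, and piece-size distribution. The corner cases to handle are those where a single piece concentrates almost all vertices of the graph (so that the $n/f(n)$-approximation is too weak on that piece) and where all pieces are small but $\opt$ is concentrated on one of them (so that the pigeonhole factor $r$ dominates). An alternative implementation of the same high-level idea may recurse on the pieces, using the $n/f(n)$-approximation as the base case and iterating the Czumaj partition at internal levels, or combine the Czumaj partition with balanced separator decompositions of $G[V_i]$ to further break up large pieces; either route would also produce the $\log f(\tw)$ factor as the depth of the recursion/iteration.
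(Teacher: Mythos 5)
Your plan has a genuine gap, and it sits exactly where the theorem's gain has to come from. After the Czumaj-style partition into $V_1,\dots,V_r$, the parts are \emph{not} pairwise non-adjacent: each $G[V_i]$ has treewidth roughly $\tw/r$, but edges between parts force you to output a single $I_i$ rather than a union. Your guarantee from part $i$ is $\alpha(G[V_i])\cdot f(|V_i|)/|V_i|$, and since $\alpha(G[V_i])\le |V_i|$ this can never certify more than $f(|V_i|)\le f(n)$ vertices; together with the degeneracy bound you can therefore never certify more than $\max(n/\tw,\,f(n))$, for \emph{any} $r$ and any enumeration of $r$'s. This is not just a weakness of the analysis: since the $n/f(n)$-approximation is a black box, it may legitimately return only that many vertices. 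Concretely, take $G$ to be a $\tw\times (n/\tw)$ grid with $n=\tw^3$ and $f(n)=\log^3 n$: here $\opt=\Theta(n)$, the target guarantee is $\Theta\bigl(n f(\tw)/(\tw\log f(\tw))\bigr)=\tilde\Theta(\tw^2\log^3\tw)$, every induced piece is bipartite so $\alpha(G[V_i])\ge |V_i|/2$, yet your scheme is only guaranteed $\max(n/\tw,\,O(\log^3 n))=\tilde O(\tw^2)$ -- short by the factor $\approx f(\tw)/\log f(\tw)$ that the theorem is supposed to gain. The two corner cases you flag (one huge piece, or $\opt$ concentrated in one small piece) are not the real obstruction; the obstruction is the factor-$r$ pigeonhole loss combined with the cap $f(n)$ on what any single call can certify, and no balancing of $r$ removes it. Your fallback sketches inherit the same problem: recursing multiplies the pigeonhole losses, and breaking large pieces with balanced separators while deleting only $o(\opt)$ vertices forces components of size about $\tw\cdot f(\tw)$, on which the black box again gives only an $\Omega(\tw)$-type ratio for polylogarithmic $f$.

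What the paper does differently, and what your proposal is missing, is a decomposition into pieces that are simultaneously of size $O(\tw)$ and pairwise non-adjacent within each of $O(\log f(\tw))$ classes, so that the per-piece outputs can be \emph{unioned} and the only pigeonhole loss is over the $O(\log f(\tw))$ classes. This requires first reducing to the regime $\opt\ge n/f(\tw)$ (so that deleting $O(n/f(\tw))$ vertices is affordable, cf.\ \Cref{simple_n_k_IS}), then deleting few bags to make every component's tree decomposition have $O(f(\tw))$ leaves (\Cref{lemma: removing subtrees}), splitting each component into a path-like part and a small ``junction'' part, and within each part classifying vertices into levels -- by the number of bags containing them in the path case, or by depth in an $O(\log f(\tw))$-depth decomposition obtained via \Cref{thm:logdepthtrans} in the junction case -- so that each level shatters into non-adjacent chunks of $O(\tw)$ vertices. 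None of these ingredients appears in your proposal, and without them the approach cannot reach the claimed $O(\tw\log f(\tw)/f(\tw))$ ratio.
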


Let $\gamma(n)$ be the approximability function of Maximum Independent Set for $n$-vertex graph (i.e., the function for which $O(\gamma(n))$-approximation exists and $o(\gamma(n))$-approximation is hard). As mentioned before, the current state of the art has provided the lower and upper bounds $\gamma(n) = \Omega(n/2^{\log^{3/4} n})$~\cite{khot2006better,hastad1996clique} and $\gamma(n) = \tilde{O}(n/\log^3 n)$ respectively~\cite{feige2004approximating}.   
Similarly, one can consider Maximum Independent Set parameterized by $\tw$ and define $\tau(\tw)$ as the approximability function of Maximum Independent Set on the setting when a tree decomposition of width $\tw$ is given.
Our result implies that the approximability functions $\gamma$ and $\tau$ are essentially the same function, so this closes the treewidth-direction of Maximum Independent Set approximation.

We find this phenomenon rather surprising. For some other parameters, such relations do not hold, e.g., when we consider the degree parameter $d$ of the input graph, the approximability function of Maximum Independent Set is $\Omega(d/\log^2 d)$ assuming UGC~\cite{toc/AustrinKS11}, while  the $\tilde{O}(n/\log^3 n)$-approximation of Feige~\cite{feige2004approximating} exists.  

Combining Theorem~\ref{thm: main-intro} with the result of Feige~\cite{feige2004approximating}, 
we obtain the following corollary.


\begin{corollary}
\label{cor:maincor}
There exists an $O\left(\frac{\tw \cdot (\log \log \tw)^3}{\log ^3 \tw}\right)$-approximation algorithm for Maximum Independent Set, where $\tw$ is the width of a given tree decomposition of the input graph.
\end{corollary}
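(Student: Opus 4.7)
The plan is to derive Corollary~\ref{cor:maincor} as a direct instantiation of Theorem~\ref{thm: main-intro}, using Feige's algorithm to supply the required $n/f(n)$-approximation. Feige's $\tilde O(n/\log^3 n)$-approximation~\cite{feige2004approximating} gives an approximation ratio of $n/f(n)$ for $f(n) = \Theta\bigl(\log^3 n / (\log \log n)^2\bigr)$, which is a well-behaved function satisfying the mild regularity assumptions mentioned in the footnote of Theorem~\ref{thm: main-intro}. So the first step is simply to observe that Feige's result fits the hypothesis of the main theorem with this particular $f$.

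The second step is to plug $f$ into the conclusion $O\bigl(\tw \cdot \log f(\tw) / f(\tw)\bigr)$ and simplify. For $f(\tw) = \log^3 \tw / (\log \log \tw)^2$, one computes
\begin{equation*}
\log f(\tw) \;=\; 3 \log \log \tw - 2 \log \log \log \tw \;=\; O(\log \log \tw),
\end{equation*}
so
\begin{equation*}
\frac{\tw \cdot \log f(\tw)}{f(\tw)}
\;=\; O\!\left(\frac{\tw \cdot \log \log \tw \cdot (\log \log \tw)^2}{\log^3 \tw}\right)
\;=\; O\!\left(\frac{\tw \cdot (\log \log \tw)^3}{\log^3 \tw}\right),
\end{equation*}
which matches the ratio stated in the corollary.

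There is no real obstacle to overcome here beyond this bookkeeping calculation: the heavy lifting is entirely in Theorem~\ref{thm: main-intro} and in Feige's algorithm. The only caveats worth a sentence in the write-up are that (i) $f$ as defined above is an asymptotic description, so we should take an integer-valued function $f(n) = \lceil c \log^3 n / (\log \log n)^2 \rceil$ for a suitable constant $c$ and handle small values of $n$ (where $\log \log n$ may be nonpositive) by treating them as a constant-sized base case; and (ii) one should briefly verify that this choice of $f$ meets whatever monotonicity/growth-rate assumptions are imposed in \Cref{sec: prelim}, which it does since $\log^3 n/(\log \log n)^2$ is eventually increasing and grows polylogarithmically. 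With these routine checks, the corollary follows immediately.
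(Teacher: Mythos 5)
Your proposal is correct and matches the paper's intent exactly: the paper derives this corollary by plugging Feige's $O(n(\log\log n)^2/\log^3 n)$-approximation into Theorem~\ref{thm: main-intro} with $f(n)=\Theta\!\left(\log^3 n/(\log\log n)^2\right)$, so that $\log f(\tw)=O(\log\log\tw)$ yields the stated $O\!\left(\tw\,(\log\log\tw)^3/\log^3\tw\right)$ ratio. Your additional remarks on integrality and the regularity assumptions on $f$ are routine and consistent with the assumptions in \Cref{sec: prelim}.
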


This improves over the result of Czumaj et al.~\cite{czumaj2005approximation} 
when $\log^{1/3} n =  o\left(\frac{\log \tw}{\log \log \tw}\right)$, i.e., when $\tw$ is larger than $\exp(\tilde{\Omega}(\log^{1/3} n))$. 
It is better than the algorithm of Feige~\cite{feige2004approximating} whenever $\tw = o(n/\log \log n)$, so overall it improves the state-of-the-art in the range of parameters
\[\exp(\tilde{\Omega}(\log^{1/3} n)) \le \tw \le o(n/\log \log n).\]

These results assume that the tree decomposition is given as part of the input. 
To remove this assumption, we can use the algorithm of  Feige et al.~\cite{DBLP:journals/siamcomp/FeigeHL08} to $O(\sqrt{\log \tw})$-approximate treewidth.
In particular, their algorithm combined with \Cref{cor:maincor} yields the following corollary in the setting when a tree decomposition is not assumed as a part of the input.

\begin{corollary}
\label{cor:notd}
There exists an $O\left(\frac{\tw \cdot (\log \log \tw)^3}{\log^{2.5} \tw}\right)$-approximation algorithm for Maximum Independent Set, where $\tw$ is the treewidth of the input graph.
\end{corollary}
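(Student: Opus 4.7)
The plan is a direct pipelining of Corollary~\ref{cor:maincor} with the treewidth approximation of Feige, Hajiaghayi, and Lee. First I would invoke the latter: given only the input graph $G$ (no tree decomposition), run their polynomial-time algorithm to produce a tree decomposition of $G$ of width $\tw' = O(\tw \cdot \sqrt{\log \tw})$, where $\tw$ is the (unknown) true treewidth of $G$. Then I would feed this decomposition into the algorithm guaranteed by Corollary~\ref{cor:maincor}, which runs in polynomial time and returns an independent set whose size is within a factor $O\bigl(\tw' (\log \log \tw')^3 / \log^3 \tw'\bigr)$ of the optimum.

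The remaining work is purely asymptotic bookkeeping. Since $\tw' = O(\tw \sqrt{\log \tw})$, we have
\[
\log \tw' = \log \tw + \tfrac{1}{2}\log\log \tw + O(1) = \Theta(\log \tw),
\]
and consequently $\log \log \tw' = \Theta(\log \log \tw)$. Substituting these equalities into the ratio above yields
\[
O\!\left(\frac{\tw \sqrt{\log \tw}\cdot (\log\log \tw)^3}{\log^3 \tw}\right) \;=\; O\!\left(\frac{\tw\cdot(\log\log \tw)^3}{\log^{2.5}\tw}\right),
\]
which is exactly the bound claimed in Corollary~\ref{cor:notd}.

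There is essentially no obstacle: the only thing to be mildly careful about is that we do not actually know $\tw$, so the ratio is stated in terms of the true (but unknown) treewidth, whereas the algorithm of Corollary~\ref{cor:maincor} formally returns a guarantee with respect to the width $\tw'$ of the decomposition it is given. This is unproblematic because the Feige--Hajiaghayi--Lee algorithm's guarantee $\tw' = O(\tw\sqrt{\log \tw})$ holds unconditionally for the optimal treewidth $\tw$ of $G$, so converting the $\tw'$-bound into a $\tw$-bound is exactly the computation above. One should also note that if $\tw$ is bounded by a constant, the corollary is trivial (an optimal independent set can be computed exactly via standard dynamic programming on the tree decomposition), so one may freely assume $\tw$ is large enough that $\log\tw$ and $\log\log\tw$ are positive and the asymptotic manipulation is valid.
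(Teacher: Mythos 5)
Your proposal is correct and matches the paper's intended argument: the paper also obtains Corollary~\ref{cor:notd} by first computing a tree decomposition of width $O(\tw\sqrt{\log \tw})$ via Feige--Hajiaghayi--Lee and then applying Corollary~\ref{cor:maincor}, with the same asymptotic substitution $\log \tw' = \Theta(\log \tw)$ giving the $\log^{2.5}\tw$ denominator. Your extra remarks about the unknown true treewidth and the constant-treewidth case are fine and do not change the argument.
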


\paragraph*{Techniques}
On a high level, our technique behind Theorem~\ref{thm: main-intro} is as follows: First we delete a set of vertices of size at most $\opt/2$ from the graph so that each of the remaining components can be partitioned into subinstances with pathwidth at most $\tw$ and subinstances with tree decompositions of width $O(\tw)$ and depth $O(\log f(\tw))$. For the subinstances of small pathwidth, we partition the vertices into  $O(\log f(\tw))$ levels based on in how many bags of the path decomposition they occur.
Similarly, for the subinstances with $O(\log f(\tw))$-depth tree decompositions, we partition the vertices in levels based on the depth of the highest bag of the tree decomposition they occur in.
In both subinstances we argue that all vertices of all but one level can be removed, in order to make the vertices in the remaining level behave well in the decomposition, after which the remaining level can be chopped into components of size roughly $O(\tw)$ such that the size of  maximum independent again does not decrease significantly.

Although some aspects of our approach are natural,
we are not aware of arguments modifying the tree decomposition as we did here in the previous literature; we expect these arguments may have more applications for designing approximation algorithms for other $NP$-hard problem parameterized by treewidth similar to Theorem~\ref{thm: main-intro}. 

\paragraph*{Organization}
The paper is organized as follows. We give preliminaries in Section~\ref{sec: prelim}.
A major ingredient of \Cref{thm: main-intro} will be an approximation algorithm for Maximum Independent Set parameterized by pathwidth, which we will be presented in Section~\ref{sec: pw}.
Then, the approximation algorithm for Maximum Independent Set parameterized by treewidth will be presented in Section~\ref{sec: tw}.
This will use the pathwidth case as a black box.
We then conclude and present open problems in \Cref{sec:conc}.

\section{Preliminaries}
\label{sec: prelim} 


\paragraph*{Basic notation}
We refer to \cite{diestel-book} for standard graph terminology.
We use the standard notation -- \(\alpha(G)\) -- to denote the independence number, i.e., the size of a maximum independendent set, of graph \(G\).
Throughout, for a natural number \(i\) we denote the set \(\{1, \dotsc, i\}\) by \([i]\), and for two natural numbers $i \le j$ we denote the set $\{i, i+1, \ldots, j\}$ by $[i,j]$.
We use \(\log\) to denote the base-\(2\) logarithm.

\paragraph*{Tree decompositions} Given a graph $G$, a tree decomposition of $G$ consists of a tree $T$, where each node $t \in V(T)$ is associated with a subset $B_t \subseteq V(G)$ of vertices called a bag, such that 

\begin{enumerate}
        \item $\bigcup_{t \in V(T)} B_{t} = V(G)$
        \item For every edge $uv \in E(G)$, there must be some node $t$ such that $\{u, v\} \subseteq B_{t}$.
        
        \item For every vertex $v \in V(G)$, the bags $\{t: v \in B_t\}$
        are connected in $T$. 
    \end{enumerate}

The \textbf{ width} of a tree decomposition is $\max_{t \in V(T)} |B_t| - 1$.
The \textbf{treewidth} of $G$ (denoted by $\tw(G)$) is the minimum number $k$, such that $G$ has a tree decomposition of width $k$. 
When the input graph is clear from the context, we simply write $\tw$ to denote the treewidth of $G$. 

A rooted tree decomposition is a tree decomposition where one node is assigned to be the root of the tree $T$.
We use standard rooted-tree definitions when talking about rooted tree decomposition.
The depth of a rooted tree decomposition is the depth of the tree $T$, i.e., the length of the longest root-leaf path.

A rooted tree decomposition $T$ is called \textbf{nice} if it satisfies that
\begin{itemize}
        \item Every node of $T$ has at most $2$ children.
        
        \item If a node $t$ has two children $t'$ and $t''$, then $t$ is called a join node and $B_t = B_{t'} = B_{t''}$.
        
        \item If a node $t$ has one child $t'$, then either:
        \begin{enumerate}
            \item $B_{t} \subset B_{t'}$ and $|B_{t'}| = |B_t| + 1$, in which case $t$ is a forget node, or
            
            \item $B_{t'} \subset B_{t}$ and $|B_t| = |B_{t'}| + 1$, in which case $t$ is an introduce node.
        \end{enumerate}
        \item If a node \(t\) has no children we call it a leaf node.
    \end{itemize}


It is well-known that any tree decomposition can be turned into a nice tree decomposition.

\begin{lemma}[\cite{kloks1994treewidth}]
\label{thm:tree decomp computation}
For every graph $G$ on $n$ vertices, given a tree decomposition $T'$ of width $\omega$, there is a nice tree decomposition $T$ with at most $4 \cdot n$ nodes and width $\omega$ that can be computed in polynomial time.
\end{lemma}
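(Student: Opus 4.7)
My plan is to transform $T'$ into a nice tree decomposition $T$ of the same width via a sequence of standard local rewrites, and then to bound the number of nodes of $T$ by amortizing over the vertices of $G$.

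The first step is to root $T'$ at an arbitrary node and to exhaustively contract any node whose bag is contained in the bag of a neighbor. Each such contraction preserves the three tree-decomposition properties and the maximum bag size. Afterwards, along every tree edge the two incident bags are incomparable, so every non-root node contains a vertex absent from the bag of its parent; charging each non-root node to such a vertex shows that the pruned tree has at most $n$ nodes. I would then enforce the structural conditions of a nice decomposition in three passes. First, I \emph{binarize} by replacing every node with $k > 2$ children by a vertical chain of copies of that node (all sharing the same bag) in which each internal chain node has exactly two children; this at most doubles the node count. Second, I \emph{subdivide} each remaining parent-to-child edge $t \to s$ with $B_t \neq B_s$ by a chain that first forgets the vertices of $B_t \setminus B_s$ one at a time (producing forget nodes) and then introduces the vertices of $B_s \setminus B_t$ one at a time (producing introduce nodes). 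Third, I enforce the \emph{join condition} $B_t = B_{t'} = B_{t''}$ at every node $t$ with two children by routing the subdivided chain so that both children's bags become equal to the parent's bag before any further transitions occur. Throughout, every intermediate bag is contained in the union of two adjacent bags of the pruned tree, so the width remains $\omega$.

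The main obstacle is the accounting needed to reach the bound of $4n$ nodes in $T$. The strategy is to amortize introduce and forget nodes against the vertices of $G$, and join and leaf nodes against the nodes of the pruned tree. Since the bags containing each fixed vertex $v \in V(G)$ form a connected subtree of $T$, the introduce and forget events associated with $v$ are controlled by the branching structure of that subtree; summing over all vertices gives at most $2n$ introduce and forget nodes. The joins and remaining original-type nodes are in turn bounded by $2n$ using that the pruned tree has at most $n$ nodes and binarization at most doubles its node count. Since every rewrite is local and is applied a linear number of times in $n + |V(T')|$, each in polynomial time, the whole construction runs in polynomial time, yielding the desired lemma.
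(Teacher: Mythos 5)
Your high-level plan (prune bags contained in a neighbour's bag, binarize, then insert introduce/forget chains along each edge) is the standard route, and the pruning bound, the width preservation, and the polynomial running time are all fine. The genuine gap is exactly at the quantitative heart of the lemma: the $4n$ node count. Forget nodes are indeed at most $n$ in total, since the nodes containing a fixed vertex $v$ form a connected subtree with a unique topmost node, so $v$ is forgotten at most once. But introduce nodes are not controlled this way: $v$ is introduced once for every lower end of its occurrence subtree, i.e.\ once for every edge along which $v$ disappears going downwards, and nothing in your construction bounds the total number of such events. Concretely, your chains create $|B_p \triangle B_c|$ introduce/forget nodes per edge $(p,c)$ of the binarized tree, so their total is $\sum_{(p,c)}|B_p \triangle B_c|$, which can be $\Theta(\omega\cdot n)$ even after your pruning step: take a star-shaped decomposition whose centre bag is $S=\{s_1,\dots,s_k\}$ and whose $t$ leaves have bags $\{y_1\},\dots,\{y_t\}$. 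Adjacent bags are incomparable, the pruned tree has $t+1\le n$ nodes, yet every centre-to-leaf chain must shed all $k$ vertices of $S$, giving about $(k+1)\cdot t$ introduce/forget nodes while $n=k+t$. So the claim ``at most $2n$ introduce and forget nodes'' is false for the construction you describe, and with it the $4n$ bound; as written your argument only yields $O(\omega\cdot n)$ nodes.

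The missing idea is a normalization step, applied before inserting the chains, that forces adjacent bags to differ by only a constant number of vertices. The classical fix (and essentially the route in Kloks' book, which the paper cites rather than proves) is to first make the decomposition \emph{smooth}: all bags of size exactly $\omega+1$ and adjacent bags sharing exactly $\omega$ vertices. A smooth decomposition of width $\omega$ always exists, can be computed in polynomial time, and has exactly $n-\omega$ nodes; then every edge needs only $O(1)$ transition nodes, the leaves and join nodes together number at most $2(n-\omega)$, and the introduce/forget nodes at most $2(n-\omega)$, giving the $4n$ bound. Without such a re-bagging step (or some other argument that bounds $\sum_{(p,c)}|B_p\triangle B_c|$ by $O(n)$), the amortization you sketch cannot be completed.
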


It is possible to also assume the following additional property without loss of generality.

\begin{lemma}
\label{prop: leaf unique}
Given a tree decomposition $T'$ of width $\omega$, there exists a nice tree decomposition of width $\omega$ and at most $4n$ nodes, that can be computed in polynomial time, such that for each leaf node $t \in V(T)$, there exists a vertex $v \in B_t$ that appears in exactly one bag, i.e., the bag $B_t$ itself. 
\end{lemma}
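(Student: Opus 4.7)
The plan is to start from the nice tree decomposition $T$ of width $\omega$ on at most $4n$ nodes guaranteed by \Cref{thm:tree decomp computation}, and then iteratively delete leaves that fail the unique-vertex property until every leaf satisfies it.

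The observation driving the procedure is the following: suppose a leaf $t$ has no $v \in B_t$ that appears only in $B_t$. Then every $v \in B_t$ also occurs in some other bag, and since the bags containing $v$ form a connected subtree and $t$ is a leaf, that other bag can only be reached through the parent $p$ of $t$; hence $v \in B_p$, and so $B_t \subseteq B_p$. In a nice tree decomposition, $B_t \subseteq B_p$ rules out $p$ being a forget node (which would require $B_p \subsetneq B_t$), leaving only two possibilities: $p$ is a join node with $B_p = B_t = B_{t'}$ for the sibling $t'$ of $t$, or $p$ is an introduce node with $B_t \subsetneq B_p$.

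Based on this, I would repeatedly apply the following cleanup. If the bad leaf $t$ has an introduce-node parent $p$, simply delete $t$: since $B_t \subseteq B_p$ the bag $B_t$ is redundant, and $p$ becomes a new leaf whose bag and ancestor structure are unchanged, so niceness is preserved and $p$ is reconsidered in the next iteration. If instead $p$ is a join node with other child $t'$, then $B_t = B_p = B_{t'}$, and both $t$ and $p$ can be removed at once by attaching $t'$ directly to $p$'s former parent (or by promoting $t'$ to the root if $p$ was the root). Because $B_{t'} = B_p$, every nice-decomposition condition $p$ satisfied with respect to its parent carries over verbatim to $t'$. Each iteration strictly decreases the number of nodes, so the process terminates in polynomial time after at most $4n$ iterations, never increases the width, preserves niceness throughout, and leaves a tree decomposition in which every leaf has a unique vertex.

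The main subtlety I foresee is verifying that niceness is preserved at the re-attachment in the join case; this boils down to a short case analysis on the type of $p$'s parent (join with another child of bag $B_p$, forget, introduce, or nonexistent if $p$ was the root), all of which transfer seamlessly to $t'$ since $B_{t'} = B_p$. The only other boundary case is when deletions would collapse the tree to a single node with bag $V(G)$; provided $V(G) \neq \emptyset$, any vertex in $V(G)$ serves as the required unique vertex for that lone leaf, so the procedure terminates correctly in this degenerate case as well.
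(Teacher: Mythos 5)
Your proposal is correct and follows essentially the same route as the paper: start from the nice decomposition of \Cref{thm:tree decomp computation}, iteratively delete a leaf lacking a unique vertex (noting its bag is contained in the parent's), and repair the only niceness violation, which occurs at a join-node parent, by contracting/reattaching the remaining child -- exactly the paper's fix. Your explicit case analysis (forget parent impossible, introduce parent needs no repair) is just a slightly more detailed rendering of the same argument.
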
 
\begin{proof}
We use~\Cref{thm:tree decomp computation} to compute a nice tree decomposition $T$. If there exists a leaf node $t \in V(T)$ that does not contain such a vertex, we delete $t$ from $T$.
Notice that all the properties of a tree decomposition continue to hold after such a deletion. 
However, if after this deletion the former parent \(s\) of \(t\) in \(T\) is not a leaf, \(s\) was a join node which now has a child with the same bag as \(s\) which violates niceness.
To repair this we can simply contract the edge between \(s\) and its remaining child in \(T\).
It is straightforward to verify that after this \(T\) remains nice.
We can iterate the above, strictly decreasing the number of nodes of \(T\), until \(T\) has the desired property.
\end{proof}


We will use the following well-known lemma of Bodlaender and Hagerup~\cite{DBLP:journals/siamcomp/BodlaenderH98} to turn a tree decomposition into a logarithmic-depth tree decomposition, while increasing the width only by a factor of three.

\begin{lemma}[{\cite[Lemma~2.2]{DBLP:journals/siamcomp/BodlaenderH98}}]
\label{thm:logdepthtrans}
Given a tree decomposition of a graph $G$ 
of width $\omega$ and having \(\gamma\) nodes, we can compute in polynomial time a rooted tree decomposition of $G$ 
of depth $O(\log \gamma)$ and width at most $3\omega + 2$.
\end{lemma}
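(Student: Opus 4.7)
The plan is to proceed by a recursive balanced edge-separation of the tree $T$ underlying the given decomposition. As a preprocessing step, I would invoke \Cref{thm:tree decomp computation} to make $T$ binary (which preserves the width and only inflates $\gamma$ by a constant factor). I would then define a recursive routine $\mathtt{Decompose}(T', R)$, which takes as input a connected subtree $T' \subseteq T$ together with a ``residue'' set $R \subseteq V(G)$ consisting of interface vertices of $T'$ that must appear in the root of the returned decomposition, and outputs a rooted tree decomposition of $G[\bigcup_{t \in V(T')} B_t]$ of depth $O(\log|V(T')|)$ and width at most $3\omega+2$.

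The recursive step uses the classical fact that every tree on $m$ nodes admits an edge whose removal leaves two subtrees each of size at most $2m/3$. Let $(s,t)$ be such an edge in $T'$, splitting $T'$ into $T'_s$ (containing $s$) and $T'_t$ (containing $t$). The new root bag is $R \cup B_s \cup B_t$, which (with the right invariant on $R$) is the union of at most three original bags and therefore has size at most $3(\omega+1)$. I would then attach $\mathtt{Decompose}(T'_s, R_s)$ and $\mathtt{Decompose}(T'_t, R_t)$ as its two children, where $R_s$ and $R_t$ are carefully chosen to include the inherited residue vertices still relevant to the respective side, together with the new interface vertices induced by the separator edge.

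Correctness reduces to verifying the three tree-decomposition axioms. Vertex and edge coverage carry over from $T$, since every original bag is contained in some bag produced by the recursion. The key structural point for connectivity is a consequence of connectivity in the original $T$: any vertex appearing in bags on both sides of the separator edge $(s,t)$ must lie in $B_s \cap B_t$, so any such ``shared'' vertex is automatically included in the new root bag. Together with the way we set $R_s$ and $R_t$, this guarantees that, for every vertex, the bags containing it in the output form a connected subtree. The depth bound $O(\log\gamma)$ is immediate from the $2/3$-balanced split, since after $O(\log\gamma)$ levels we bottom out at single-node subtrees.

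The main obstacle is maintaining a tight invariant on the size and structure of $R$ throughout the recursion, so that the new root bag at every level is bounded by $3(\omega+1)$ and thus the width stays at most $3\omega+2$. This requires careful bookkeeping of which original bags contribute to the current residue, and using the connectivity of $T$ to argue that inherited residue vertices which cross the current separator must already lie in $B_s \cup B_t$ and hence do not inflate the residue of the children beyond the size of a single original bag. Once this invariant is in place the lemma follows; polynomial-time computability is clear, since each recursive step performs only a balanced separator computation plus elementary set operations, and the recursion has only logarithmic depth and hence polynomially many nodes in total.
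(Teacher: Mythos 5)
The paper does not prove this lemma at all: it is imported verbatim from Bodlaender and Hagerup, whose own argument is based on parallel tree contraction (rake and compress operations that merge a constant number of bags, which is where the factor $3\omega+2$ comes from). Your recursive balanced-edge-separation scheme is a legitimate alternative route that also appears in the literature, and its depth bound and polynomial-time claims are unproblematic (the $2/3$-balanced edge exists once the tree is made binary, which your preprocessing via a nice decomposition ensures).

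However, the width bound --- the only delicate part --- is exactly where your plan has a gap, and the justification you offer does not close it. The connectivity observation you invoke (a vertex whose bags lie on both sides of the separator edge $(s,t)$ must belong to $B_s\cap B_t$) is true, but it only concerns vertices that \emph{cross} the current separator. The real growth mechanism is different: the residue $R_s$ handed to a child must contain all inherited residue vertices that live entirely on the $s$-side (they must reappear in the child's root bag to keep their occurrences connected to the parent's root bag), \emph{plus} the new interface bag created by the edge $(s,t)$. So if $R$ was already the union of (parts of) two original bags, $R_s$ can involve three, the grandchild four, and so on; the root bag $R\cup B_s\cup B_t$ is then no longer covered by three original bags and the width drifts to $\Theta(\omega\log\gamma)$ under the naive implementation. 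To actually get $3\omega+2$ you need an additional invariant --- e.g.\ that every subproblem has at most two boundary bags --- which forces you to choose the split on the path between the two boundary nodes (or to alternate size-balancing splits with boundary-reducing splits) and then to re-establish the $O(\log\gamma)$ depth under that constrained choice. This is precisely the bookkeeping you defer to "the right invariant on $R$", so as written the proposal asserts rather than proves the width bound; the rest of the argument is sound.
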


\paragraph*{Path decompositions} 
A path decomposition is a tree decomposition where the tree $T$ is a path.
The \textbf{pathwidth} of $G$ is the minimum number $k$, such that $G$ has a path decomposition of width $k$. It is denoted by $\pw(G)$.
A nice path decomposition is a nice tree decomposition where $T$ is a path, and the root is assigned to a degree-1 node, i.e., at one end of the path.
Note that there are no join-nodes in a nice path decomposition.

We observe that any path decomposition can be turned into a nice path decomposition with $2n$ nodes.
\begin{lemma}
\label{lem:nicepathdecomp}
For every graph $G$ on $n$ vertices, given a path decomposition $P'$ of width $\omega$, there is a nice path decomposition $P$ with $2n$ nodes and width $\omega$, that can be computed in polynomial time.
\end{lemma}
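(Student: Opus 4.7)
The plan is to construct $P$ directly by interpolating between consecutive bags of $P'$ using one-vertex-at-a-time introduce and forget chains, choosing the two endpoint (leaf and root) bags cleverly so that the total count is $2n$ rather than $2n+2$. Let $P'$ have bags $B_1, B_2, \ldots, B_m$; I orient so that one end corresponds to the leaf and the other end corresponds to the root of $P$.

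The construction proceeds in three stages. First, I would pick an arbitrary $v_1 \in B_1$, make the leaf of $P$ a single node with bag $\{v_1\}$, and then append a chain of $|B_1|-1$ introduce nodes that add the remaining vertices of $B_1$ one at a time, ending at a node whose bag is exactly $B_1$. Second, for $i = 1, \ldots, m-1$, I would interpolate from $B_i$ to $B_{i+1}$ as follows: append $|B_i \setminus B_{i+1}|$ forget nodes, each removing one vertex of $B_i \setminus B_{i+1}$, until the current bag equals $B_i \cap B_{i+1}$, then append $|B_{i+1} \setminus B_i|$ introduce nodes, each adding one vertex of $B_{i+1} \setminus B_i$, until the current bag equals $B_{i+1}$. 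The node with bag $B_m$ produced at the end of this process becomes the root of $P$.

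I would then verify the four required properties. For vertex/edge coverage: every vertex of $G$ appears in some $B_i$ and hence appears among the bags of $P$; every edge $uv$ of $G$ is contained in some $B_i$ by hypothesis, and bag $B_i$ appears verbatim as a node of $P$ by construction. For the contiguity property: in $P'$, the bags containing a fixed vertex $v$ form an interval $[f(v), \ell(v)]$; in $P$, $v$ is added exactly once at the unique introduce node produced at the transition into $B_{f(v)}$ (or, if $f(v)=1$, during the initial build-up), is then present in every intermediate bag, and is removed at most once (only if $\ell(v)<m$) at the forget node produced at the transition out of $B_{\ell(v)}$. For the width: every bag of $P$ is either an initial prefix of $B_1$ (size at most $|B_1|$) or is obtained from some $B_i$ by deleting vertices or from $B_{i+1}$ by deleting vertices, so all bags have size at most $\omega+1$. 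Niceness is immediate because every non-leaf node has exactly one child and each transition adds or removes exactly one vertex, as prescribed.

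Finally, I would count: the leaf contributes $1$ node; the initial build-up contributes $|B_1|-1$ introduce nodes; summed over all transitions, each vertex of $G$ contributes exactly one introduce node (at its first appearance) beyond the leaf and at most one forget node (none if it lies in $B_m$). The total is
\[
1 + (|B_1|-1) + \sum_{i=1}^{m-1}\bigl(|B_i \setminus B_{i+1}| + |B_{i+1} \setminus B_i|\bigr) = 1 + (n-1) + (n-|B_m|) \le 2n,
\]
and the whole construction is clearly polynomial time. The only genuinely subtle point in the proof is matching the claimed bound of $2n$ rather than the naive $2n+2$ one would get from starting and ending at empty bags; I would handle this by using a singleton leaf bag $\{v_1\}$ and by allowing the root bag to equal $B_m$ instead of forcing it to be empty. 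Everything else is bookkeeping.
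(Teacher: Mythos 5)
Your construction is essentially the paper's: the standard one-vertex-at-a-time introduce/forget interpolation along the given path decomposition, and all the verification steps (coverage, contiguity, width via forgetting before introducing, niceness) are sound. The one discrepancy is the node count: stopping at a node with bag $B_m$ gives $1+(|B_1|-1)+(n-|B_1|)+(n-|B_m|)=2n-|B_m|$ nodes, i.e.\ only \emph{at most} $2n$, whereas the lemma is stated (and later used, when the bags are indexed $B_1,\dots,B_{2n}$) with exactly $2n$ nodes; the paper gets this by running empty-to-empty ($2n+1$ nodes, one introduce or forget per edge) and deleting the initial empty bag, and you can recover it in your setup simply by appending $|B_m|$ forget nodes at the end (and handling the degenerate case $B_1=\emptyset$, where picking $v_1\in B_1$ is not possible, by first discarding empty bags). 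This is a cosmetic fix rather than a genuine gap, since the downstream arguments only need an upper bound of $2n$ on the number of bags together with the one-introduce-or-forget-per-edge property.
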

\begin{proof}
By introducing vertices one at a time and forgetting vertices one at a time we obtain a nice path decomposition where the bag of the first node is empty, the bag of the last node is empty, and on each edge exactly one vertex is either introduced or forgotten, and therefore the path decomposition has exactly $2n$ edges and $2n+1$ nodes.
We can remove the first bag that is empty to get a path decomposition with exactly $2n$ nodes.
\end{proof}

\paragraph*{Maximum independent set approximation}
Given a function \(r\) that maps graphs to numbers greater than \(1\), an \(r\)-approximation algorithm for Maximum Independent Set takes as input a graph \(G\) and outputs in polynomial time an independent set in \(G\) of size at least \(\frac{\alpha(G)}{r(G)}\).
We usually denote any occurrence of \(|V(G)|\) in \(r\) by \(n\).

Let us now detail our assumptions on the function $f$ in  \Cref{thm: main-intro}.
We assume that the approximation ratio $n/f(n)$ of the given approximation algorithm is a non-decreasing function on $n$.
This assumption is reasonable because if $n/f(n)$ would be decreasing at some point, we could improve the approximation ratio by adding universal vertices to the graph; note that adding universal vertices does not change the optimal solution, but increases $n$.
This also implies that the function $f(n)$ grows at most linearly in $n$.
We assume that for arbitrary fixed constant $c \ge 1$, it holds that $f(c \cdot n) \in O(f(n))$.
We also assume that the function $f$ can not decrease too much when $n$ grows, in particular, we assume that for arbitrary fixed constant $c \ge 1$ it holds that $f(c \cdot n) \in \Omega(f(n))$.

Moreover, we will use a basic result about finding independent sets whose size depends on the treewidth of the graph.
Recall that a graph $G$ is $d$-degenerate if there is always a vertex of degree at most $d$ in any induced subgraph of $G$. 
It is known that every graph $G$ is $\tw(G)$-degenerate:
Simply consider the vertex that is contained at a leaf bag and no other bag of a tree decomposition $T$ of any induced subgraph of \(G\) as given by \Cref{prop: leaf unique}. This vertex has degree at most $\tw(G)$.
Therefore, we obtain a following trivial algorithm for approximating Maximum Independent Set parameterized by treewidth.

\begin{lemma}
\label{simple_n_k_IS}
There is a polynomial time algorithm that given a graph $G$ on $n$ vertices finds an independent set of size at least $n/(\tw(G)+1)$.
\end{lemma}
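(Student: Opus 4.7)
The plan is to use the standard greedy algorithm on a degenerate graph. I would not need to compute any tree decomposition; I would only use the structural fact (already established in the preliminaries via \Cref{prop: leaf unique}) that every graph $G$ is $\tw(G)$-degenerate, i.e., every induced subgraph of $G$ contains a vertex of degree at most $\tw(G)$.

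The algorithm I would describe is: maintain the current induced subgraph $H$, initially equal to $G$, and an independent set $I$, initially empty. While $V(H) \neq \emptyset$, pick a vertex $v$ of minimum degree in $H$, add $v$ to $I$, and remove $v$ together with $N_H(v)$ from $H$; repeat. This is clearly polynomial time, and the output $I$ is an independent set because each newly added vertex has all of its neighbors in $H$ (and hence all of its neighbors among previously considered vertices that might still have been candidates) deleted before the next iteration.

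The only thing to verify is the size bound. In each iteration, because $H$ is an induced subgraph of $G$, degeneracy guarantees that $H$ contains a vertex of degree at most $\tw(G)$, so the chosen $v$ has $|N_H(v)| \le \tw(G)$. Hence each iteration removes at most $\tw(G) + 1$ vertices from $H$ while adding exactly one vertex to $I$. Summing over all iterations yields $|I| \ge n/(\tw(G) + 1)$.

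There is no real obstacle here: the proof is a straightforward combination of the degeneracy fact from the preliminaries with the classical greedy argument. The only minor subtlety worth stating explicitly is that degeneracy must be invoked on each induced subgraph $H$ that arises during the greedy process, not merely on $G$ itself, but this follows immediately since $\tw(H) \le \tw(G)$ for any induced subgraph $H$ of $G$.
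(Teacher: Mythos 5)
Your proposal is correct and follows essentially the same route as the paper: the greedy minimum-degree algorithm combined with the $\tw(G)$-degeneracy fact from the preliminaries, noting that each iteration deletes at most $\tw(G)+1$ vertices. Your explicit remark that degeneracy is applied to each intermediate induced subgraph $H$ (valid since $\tw(H)\le\tw(G)$) is exactly the point the paper leaves implicit in its phrase ``by the aforementioned degeneracy argument.''
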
 
\begin{proof}
Iteratively assign a vertex of minimum degree to the independent set and delete its neighbors.
By the aforementioned degeneracy argument, at each iteration at most $\tw(G)+1$ vertices are deleted, so the number of iterations and the size of the found independent set is at least $n/(\tw(G)+1)$.
\end{proof}

Note that the algorithm of \Cref{simple_n_k_IS} does not need a tree decomposition as an input.

\section{Approximation parameterized by pathwidth}
\label{sec: pw} 
In this section, we prove a version of \Cref{thm: main-intro} where instead of a tree decomposition, the input graph is given together with a path decomposition.
This will be an important ingredient for proving \Cref{thm: main-intro}.
In particular, this section is devoted to the proof of the following lemma.

\begin{lemma}
\label{lem:pathwidth-main}
Let $f: {\mathbb N} \rightarrow {\mathbb N}$ be a function such that there exists an $\frac{n}{f(n)}$-approximation algorithm for Maximum Independent Set, where $n$ is the number of vertices of the input graph, and $f$ satisfies the assumptions outlined in \Cref{sec: prelim}.
Then there exists an $O\left(\frac{\pw \cdot \log {f(\pw)}}{f(\pw)}\right)$-approximation algorithm for Maximum Independent Set, where $\pw$ is the width of a given path decomposition of the input graph.
\end{lemma}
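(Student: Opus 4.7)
The plan is to partition vertices into $O(\log f(\pw))$ levels by the number of bags of the path decomposition they occupy, chop the path decomposition into pieces calibrated to each level, and feed each piece to the given $(n/f(n))$-approximation. After invoking \Cref{lem:nicepathdecomp} to assume the decomposition is nice with bags $B_1, \ldots, B_m$ and $m \le 2n$, define $\mathrm{span}(v) = |\{i : v \in B_i\}|$; the counting identity $\sum_v \mathrm{span}(v) \le m(\pw+1) \le 2n(\pw+1)$ is the workhorse. Partition $V(G)$ into levels $V_\ell = \{v : 2^\ell \le \mathrm{span}(v) < 2^{\ell+1}\}$ for $\ell = 0, \ldots, L-1$ and $V_L = \{v : \mathrm{span}(v) \ge 2^L\}$, with $L = \lceil \log f(\pw) \rceil$, so there are $L+1 = O(\log f(\pw))$ levels.

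For each non-super level $\ell < L$, pick a uniform random offset $r \in \{0, \ldots, W_\ell - 1\}$ with $W_\ell = 8 \cdot 2^\ell$ and cut the bag sequence at positions $r + j W_\ell$ for $j = 1, 2, \ldots$, producing pieces of at most $W_\ell$ consecutive bags each. For each piece $P$, let $V_P \subseteq V_\ell$ be the level-$\ell$ vertices whose entire range of bag indices lies inside $P$; the counting identity applied to $P$ gives $|V_P| \le W_\ell(\pw+1)/2^\ell = O(\pw)$. Apply the given $(n/f(n))$-approximation to $G[V_P]$; this returns an IS of ratio $O(\pw/f(\pw))$ by the regularity assumption $f(c\pw) = \Theta(f(\pw))$. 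Since distinct pieces share no bag, the sets $V_P$ induce no inter-piece edges, so the union of per-piece ISs is a valid IS in $G$. A level-$\ell$ vertex has span $< 2^{\ell+1} = W_\ell/4$, so the random offset cuts it with probability $\le 1/4$; hence at least $3/4$ of the level-$\ell$ members of any fixed optimum $I^*$ are retained in expectation, yielding an IS of expected size $\Omega(|I^* \cap V_\ell| \cdot f(\pw)/\pw)$. Derandomization by enumerating all $W_\ell = \mathrm{poly}(n)$ offsets is routine.

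For the super level, the counting identity gives $|V_L| \le 2n(\pw+1)/2^L = O(n(\pw+1)/f(\pw))$. I plan to recurse on $G[V_L]$ with the path decomposition obtained by intersecting each bag with $V_L$ and collapsing consecutive bags of equal content; this has width at most $\pw$ and at most $2|V_L|$ bags. Provided $f(\pw) \ge c(\pw+1)$ for a sufficiently large constant $c$, the recursion shrinks the number of vertices by a constant factor and terminates in $O(\log n)$ rounds, yielding an IS of size $\Omega(|I^* \cap V_L| \cdot f(\pw)/(\pw \log f(\pw)))$ by induction. Combining, an averaging argument identifies some level $\ell^*$ with $|I^* \cap V_{\ell^*}| \ge \opt/(L+1)$, and returning the best IS across all levels delivers the promised IS of size $\Omega(\opt \cdot f(\pw)/(\pw \log f(\pw)))$.

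The main obstacle is handling the super level. Naive chopping cannot apply to super vertices because their span may exceed any feasible window. The recursion-with-compression strategy works cleanly when $f(\pw) = \Omega(\pw+1)$; for the boundary regime $f(\pw) = O(\pw+1)$, the target ratio $\pw \log f(\pw)/f(\pw)$ degenerates to $\Omega(\log f(\pw))$, and a separate argument is needed to round out the analysis.
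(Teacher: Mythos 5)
Your treatment of the bounded-span levels is essentially sound and mirrors the paper's argument (the paper uses fixed cut bags and keeps the crossing vertices as separate small sets $X_r$ rather than using random offsets, but both give $O(\pw)$-size pieces with no inter-piece edges and only a constant-factor loss per level). The genuine gap is your ``super level''. The paper never needs a recursion there, because of two ingredients you are missing. First, it reduces to the case $\opt \geq n/f(k)$ (with $k=\pw+1$): if $\opt < n/f(k)$, the trivial degeneracy-based greedy of \Cref{simple_n_k_IS} already returns $n/(\tw+1) \geq \opt \cdot f(k)/(\tw+1)$ vertices, i.e.\ an $O(\pw/f(\pw))$-approximation, so nothing else is needed. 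Second, it places the top span threshold at $\approx 4k\cdot f(k)$ rather than at $f(\pw)$: then the counting identity $\sum_v \ell(v)\le 2nk$ gives $|V'|\le n/(2f(k))\le \opt/2$ (\Cref{lem:smallisinlong}), so the long vertices can simply be discarded. The number of levels stays $O(\log f(\pw))$ despite the higher threshold because all vertices of span below $2k$ are merged into a single level $V_0$, whose per-window size is controlled not by a span lower bound but by niceness of the decomposition (at most one introduction per bag, giving $|Y_r|\le 2L-1\le 4k$ in the proof of \Cref{lem: pw apx inside V_i}).

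Because you start the dyadic scale at span $1$ and cap it at $2^{L}\approx f(\pw)$, your leftover class $V_L$ can have $\Theta(n\pw/f(\pw))$ vertices, which is far larger than $\opt$ in general and cannot be discarded; the recursion you propose to compensate only shrinks the instance when $f(\pw)\ge c(\pw+1)$, i.e.\ essentially only if a constant-factor approximation for Maximum Independent Set exists, which excludes the entire relevant range of $f$ (e.g.\ Feige's $f(n)=\tilde{\Theta}(\log^3 n)$) — a regime you yourself flag as unresolved. Moreover, even where the recursion does shrink the instance, the sketched induction is not tight: combining the per-round ``best level'' set with the recursive solution by taking a maximum loses a constant factor at every one of the $O(\log n)$ recursion depths (or, bookkept differently, introduces $O(\log f(\pw)\cdot\log n)$ groups to average over), so the stated ratio $O\bigl(\pw\log f(\pw)/f(\pw)\bigr)$ does not follow as written. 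To repair the proof you should import the paper's two ingredients: dispose of the case $\opt< n/f(k)$ via \Cref{simple_n_k_IS}, raise the long-vertex threshold to $\Theta(\pw\cdot f(\pw))$ so the long class has size at most $\opt/2$ and can be dropped, and collapse all spans below $2k$ into one level handled via the niceness/introduce-counting argument.
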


Throughout this section we will use $G$ to denote the input graph and $\pw$ to denote the width of the given path decomposition of $G$. We denote by $k = \pw+1$ the maximum size of a bag in the given decomposition.
Note that by our assumptions on the function $f$, it holds that $f(k) = \Theta(f(\pw))$.

Let us denote $\opt = \alpha(G)$.
If $\opt < \frac{n}{f(k)}$, then \Cref{simple_n_k_IS} gives us a solution of size at least
\[\frac{n}{\tw(G)+1} \ge \opt \cdot \frac{f(k)}{\tw(G)+1},\]
i.e., an $O(\tw(G)/f(k))$-approximation, which would give the desired result by the facts that $\tw(G) \le \pw$ and $f(k) = \Omega(f(\pw))$.
Therefore, in the rest of this section we will assume that $\opt \geq \frac{n}{f(k)}$.

Let $P$ be the given path decomposition of $G$.
By \Cref{lem:nicepathdecomp}, we can assume without loss of generality that $P$ is a nice path decomposition and has exactly $2 n$ bags, which we will denote by $B_1, \ldots, B_{2n}$ in the order they occur in the path.
For each $v \in V(G)$, we define the \textbf{length} of $v$ to be the number of bags in $P$ that contain $v$, and denote the length of $v$ by $\ell(v)$.
In particular,
\[\ell(v) = |\{i \in [1,2n] \colon v \in B_i\}|.\]

Then, we partition $V(G)$ into $2 + \lceil \log f(k) \rceil$ sets based on the lengths of the vertices: 
    \begin{align*}
        V_0 &= \{v \colon \ell(v) < 2k\} \\
        V_i &= \{v \colon \ell(v) \in [k \cdot 2^i, k  \cdot2^{i+1})\}, & & 1 \leq i \leq \lceil \log f(k) \rceil \\
        V' &= \{v \colon \ell(v) \geq 4k \cdot 2^{\lceil \log f(k) \rceil}\} 
    \end{align*}

Note that $(V_0, V_1, \ldots, V_{\lceil \log f(k) \rceil}, V')$ is indeed a partition of $V(G)$.
We first show that the set $V'$, which consists of the longest vertices, can only contribute to at most half of the optimal solution.

\begin{lemma}
\label{lem:smallisinlong}
It holds that $|V'| \leq \opt/2$.     
\end{lemma}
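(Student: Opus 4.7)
The plan is to prove this by a straightforward double counting argument on the total length of vertices, combined with the standing assumption $\opt \geq n/f(k)$.

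First I would bound the sum $\sum_{v \in V(G)} \ell(v)$. Since $\ell(v)$ counts in how many bags $v$ appears, this sum equals $\sum_{i=1}^{2n} |B_i|$, and since each bag has size at most $k$ and there are $2n$ bags, we get
\[
\sum_{v \in V(G)} \ell(v) \;\le\; 2nk.
\]

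Next I would use the definition of $V'$ to get a lower bound on the contribution of each vertex in $V'$. Every $v \in V'$ satisfies $\ell(v) \geq 4k \cdot 2^{\lceil \log f(k) \rceil} \geq 4k \cdot f(k)$, since $2^{\lceil \log f(k) \rceil} \geq f(k)$. Restricting the sum above to vertices in $V'$ gives
\[
|V'| \cdot 4k \cdot f(k) \;\le\; \sum_{v \in V'} \ell(v) \;\le\; 2nk,
\]
so $|V'| \le \frac{n}{2f(k)}$.

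Finally, combining this with the assumption $\opt \geq \frac{n}{f(k)}$ that was established just before the lemma (recall this is the only remaining case after using \Cref{simple_n_k_IS}), we conclude
\[
|V'| \;\le\; \frac{n}{2f(k)} \;\le\; \frac{\opt}{2},
\]
as desired. There is no real obstacle here; the only thing to be slightly careful about is the small constant in the exponent. The factor $4$ (rather than $2$) in the definition of $V'$ provides exactly the slack needed to absorb the $2n$ total bag count while still leaving a factor of $2$ to beat $\opt$, which is why the argument goes through cleanly.
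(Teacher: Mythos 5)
Your proof is correct and follows the same double-counting argument as the paper: bound $\sum_v \ell(v)$ by $2nk$, use $\ell(v) \ge 4k\cdot f(k)$ for $v \in V'$, and invoke the standing assumption $\opt \ge n/f(k)$. No differences worth noting.
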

\begin{proof}
First, notice that $\sum_{v \in V(G)} \ell(v) \leq 2 n k$.
This is because $P$ has $2n$ bags, each vertex appears in $\ell(v)$ bags of $P$, and each bag of $P$ can have at most $k$ vertices appearing in it. 
Now, because for vertices $v \in V'$ we have $\ell(v) \geq 4k \cdot 2^{\lceil \log f(k) \rceil} \ge 4k \cdot f(k)$ the vertices in $V'$ contribute at least $\sum_{v \in V'} \ell(v) \ge 4k \cdot f(k) \cdot |V'|$ to the sum.
Therefore, it holds that 
\[|V'| \le \frac{2nk}{4k \cdot f(k)} \le \frac{n}{2 \cdot f(k)} \le \opt/2,\]
as desired.
\end{proof}

\Cref{lem:smallisinlong} implies that at least half of any maximum independent set in $G$ must be in the subgraph $G[V_0 \cup V_1 \cup \ldots \cup V_{\lceil \log f(k) \rceil}]$.
In the rest of this section, we will focus on the following lemma.

\begin{lemma}
\label{lem: pw apx inside V_i} 
For each $i \in [0,\lceil \log f(k) \rceil]$, there is a $O(k/f(k))$-approximation algorithm for Maximum Independent Set in $G[V_i]$. 
\end{lemma}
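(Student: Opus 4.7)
The plan is to partition the $2n$ bags of the given nice path decomposition $P$ into consecutive chunks of an appropriately chosen length $L$ so that every vertex of $V_i$ either has its bag-interval entirely inside one chunk (a \emph{good} vertex) or straddles exactly one chunk boundary (a \emph{bad} vertex), and then to approximate the good and bad parts separately by invoking the assumed $n/f(n)$-approximation on pieces of size $O(k)$. Specifically, I would set $L = k \cdot 2^{i+1}$; by definition of $V_i$, every $v \in V_i$ has $\ell(v) < L$, and for $i \ge 1$ also $\ell(v) \ge L/2$. Chop $P$ into chunks $C_1, C_2, \ldots$ of $L$ consecutive bags each (the last one possibly shorter), define $S_j$ as the set of $v \in V_i$ whose bag-interval lies entirely inside $C_j$, and $B_j$ as the set of $v \in V_i$ whose bag-interval meets both $C_j$ and $C_{j+1}$. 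Since $\ell(v) < L$ forces the interval of $v$ to meet at most two consecutive chunks, the family $\{S_j\}_j \cup \{B_j\}_j$ partitions $V_i$ into a good set $V_i^g = \bigsqcup_j S_j$ and a bad set $V_i^b = \bigsqcup_j B_j$.

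The argument rests on two structural properties. First, every $|S_j|$ and $|B_j|$ is $O(k)$: clearly $|B_j| \le k$, as all of $B_j$ appears in the common boundary bag between $C_j$ and $C_{j+1}$; for $|S_j|$ when $i \ge 1$, the double count $\sum_{v \in S_j} \ell(v) \le \sum_{p \in C_j} |B_p| \le L k$ combined with $\ell(v) \ge L/2$ gives $|S_j| \le 2k$; for $i = 0$, where that length lower bound fails, I would instead use that every $v \in S_j$ is introduced at some bag of $C_j$ and that at most one vertex is introduced per bag of a nice path decomposition, giving $|S_j| \le L = 2k$. Second, since an edge of $G$ forces its endpoints to share a bag of $P$, distinct $S_j$'s are pairwise non-adjacent (their bags are disjoint), and $B_j, B_{j'}$ are non-adjacent whenever $|j - j'| \ge 2$ (the bags meeting $B_j$ and $B_{j'}$ then lie in disjoint subpaths of $P$).

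Next, I would invoke the assumed $n/f(n)$-approximation on each $G[S_j]$ and each $G[B_j]$. Since these graphs have at most $2k$ vertices and $n/f(n)$ is non-decreasing (so $f(n)/n$ is non-increasing), each call returns an independent set of size at least $\alpha(\cdot) \cdot f(2k)/(2k) = \alpha(\cdot) \cdot \Omega(f(k)/k)$, using $f(2k) = \Theta(f(k))$. The union of the per-$S_j$ solutions is an independent set $I_S$ with $|I_S| \ge \alpha(G[V_i^g]) \cdot \Omega(f(k)/k)$. For the bad vertices I would form $B^{\mathrm{odd}} = \bigsqcup_{j \text{ odd}} B_j$ and $B^{\mathrm{even}} = \bigsqcup_{j \text{ even}} B_j$; inside each parity class the $B_j$'s are pairwise non-adjacent, so the unions of the per-$B_j$ solutions yield independent sets $I_B^{\mathrm{odd}}, I_B^{\mathrm{even}}$ of sizes at least $\alpha(G[B^{\mathrm{odd}}]) \cdot \Omega(f(k)/k)$ and $\alpha(G[B^{\mathrm{even}}]) \cdot \Omega(f(k)/k)$. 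The algorithm outputs the largest of $I_S$, $I_B^{\mathrm{odd}}$, $I_B^{\mathrm{even}}$. Splitting any optimum independent set of $G[V_i]$ shows $\alpha(G[V_i]) \le \alpha(G[V_i^g]) + \alpha(G[B^{\mathrm{odd}}]) + \alpha(G[B^{\mathrm{even}}])$, so averaging over the three candidates gives an output of size $\Omega(\alpha(G[V_i]) \cdot f(k)/k)$, i.e., an $O(k/f(k))$-approximation.

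The hard part will be obtaining the uniform bound $|S_j| = O(k)$ across all values of $i$: the bag-slot double count used for $i \ge 1$ crucially relies on the lower bound $\ell(v) \ge L/2$, which is absent in the case $i = 0$, so for $V_0$ one must instead exploit the introduction-count structure of a nice path decomposition. Once this uniform chunk-size bound is in place, the remainder of the argument is routine bookkeeping.
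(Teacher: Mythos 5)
Your proposal is correct and follows essentially the same approach as the paper: both chop the nice path decomposition into windows of length $\Theta(L)$ (where $L$ bounds the lengths in $V_i$), bound each piece by $O(k)$ vertices via the length lower bound for $i\ge 1$ and the one-introduction-per-bag property of niceness for $i=0$, run the black-box $n/f(n)$-approximation on the $O(k)$-sized pieces, and recover a constant fraction of $\alpha(G[V_i])$ by splitting $V_i$ into a constant number of families of pairwise non-adjacent pieces. The only differences are cosmetic bookkeeping: the paper marks the vertices lying in evenly spaced ``cut bags'' $B_{2Lr}$ as one family and the strictly-in-between vertices as the other (a two-way split losing a factor $2$), whereas you classify vertices by whether their interval crosses a chunk boundary and handle the crossing vertices with an odd/even parity split (a three-way split losing a factor $3$).
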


It is easy to see how \Cref{lem: pw apx inside V_i} implies \Cref{lem:pathwidth-main}.
For each such $G[V_i]$, we invoke Lemma~\ref{lem: pw apx inside V_i} to obtain a $O(k/f(k))$-approximate solution $S_i \subseteq V_i$. Our algorithm returns the set $S_i$ with the largest cardinality.  
Since there are at most $O(\log f(k))$ such sets, by Lemma~\ref{lem:smallisinlong} there must be some integer $i^*$ for which $\alpha(G[V_{i^*}]) \geq \Omega(\opt/\log f(k))$. 
Therefore, the returned set must have size at least
\[\frac{\Omega(\opt/\log f(k))}{O(k/f(k))} = \opt \cdot \Omega \left(\frac{f(k)}{k \cdot \log f(k)}\right) = \opt \cdot \Omega \left(\frac{f(\pw)}{\pw \cdot \log f(\pw)}\right).\]

Therefore, to finish the proof of \Cref{lem:pathwidth-main}, it remains to prove~\Cref{lem: pw apx inside V_i}.
\begin{proof}[Proof of \Cref{lem: pw apx inside V_i}.]
Recall that the bags of $P$ are denoted by $B_1, B_2,\ldots, B_{2n}$ where $B_h$ is the $h$-th bag in the order from left to right.
Let $L = \max_{v \in V_i} \ell(v)$ denote the maximum length of a vertex $v \in V_i$.
Recall that by our definition of $V_i$, it holds that if $i = 0$, then $L < 2k$, and if $i > 0$, then all vertices in $V_i$ have length between $L/2$ and $L$.
We partition the set $V_i$ into sets $X_r$ and $Y_r$ as follows. 

\begin{itemize}
    \item For each $r \in [1, \lfloor 2n/(2L) \rfloor]$, we define $X_r = B_{2L r} \cap V_i$. These sets contain the vertices of $V_i$ that appear in bags $B_{2 L}, B_{4L}, \ldots $ and the vertices in $B_{2L r}$ can never occur in the same bag as the vertices in $B_{2L r'}$, for any $r \neq r'$, since all vertices in $V_i$ have length at most $L$. Let $X=\bigcup_{r} X_{r}$. 

\item Denote the remaining vertices by $Y = V_i \setminus X$. We further partition $Y$ into sets $Y_r$ for $r \in [1, \lceil 2n/(2L) \rceil]$, where $Y_r$ contains the vertices $v \in Y$ that occur only in the bags $B_j$ in the interval $j \in [2L(r-1)+1, 2Lr-1]$.
\end{itemize}

It follows from definitions that $X \cup Y = V_i$. See~\Cref{fig:IS_in_V1} for an illustration.

\begin{figure}[h]
    \centering
    \includegraphics[width=0.9\textwidth]{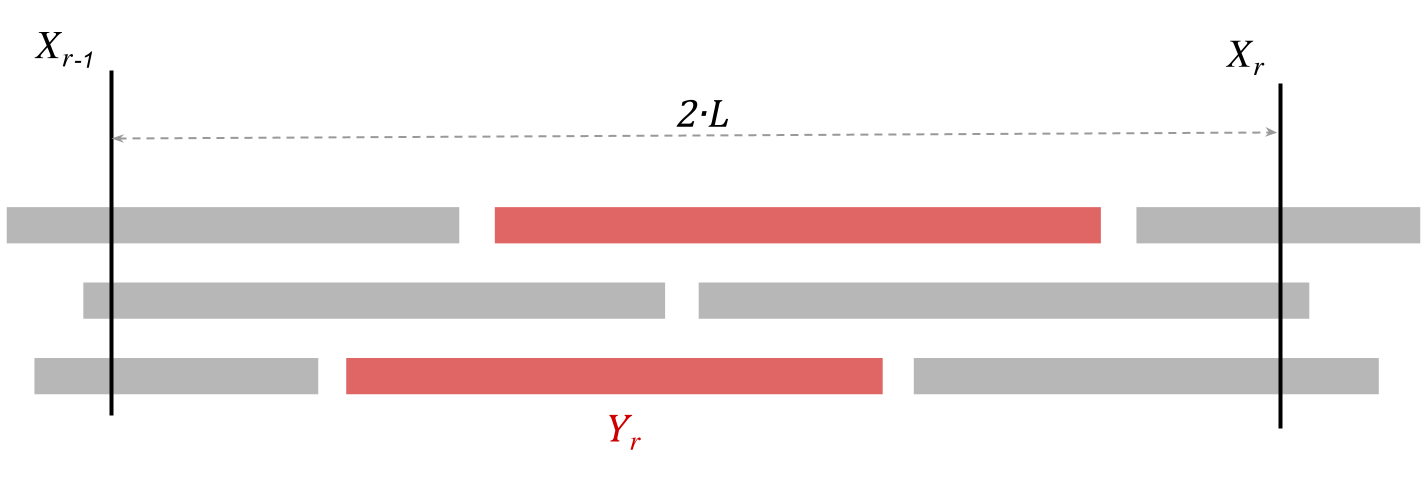}
    \caption{$X_{r}$ is the set of vertices in $V_i$ that are in the $(2 L r)$-th bag. $Y_{r}$ (in red) is the set of vertices in $V_i$ that start after $X_{r-1}$ and end before $X_{r}$. }
    \label{fig:IS_in_V1}
\end{figure}

We prove the following claim.

\begin{claim}
\label{claim:pw small set} 
For all $r \in {\mathbb N}$, both sets $X_r$ and $Y_r$ have size at most $4k$.   
\end{claim}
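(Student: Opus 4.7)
The plan is to establish the two bounds independently. For $X_r$, by definition $X_r \subseteq B_{2Lr}$, so since the given path decomposition has width $\pw = k-1$ the inclusion immediately gives $|X_r| \leq k \leq 4k$. The entire substance of the claim is the bound on $|Y_r|$, for which I would use the crucial observation that each vertex of $Y_r$ has \emph{all} of its bag-occurrences confined to the same window of $2L - 1$ consecutive bags, namely $B_{2L(r-1)+1}, \ldots, B_{2Lr-1}$.

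For the case $i \geq 1$, I would argue by double-counting vertex-bag incidences within this window. By the definition of $V_i$, every $v \in V_i$ has length at least $k \cdot 2^i$, while $L < k \cdot 2^{i+1}$, giving the key inequality $\ell(v) > L/2$. At the same time, every bag of $P$ holds at most $k$ vertices, so the total vertex-occurrences in the window is at most $(2L-1)k$. Since every $v \in Y_r$ contributes \emph{all} of its $\ell(v) > L/2$ occurrences to this total, dividing yields $|Y_r| < 4k$.

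The case $i = 0$ is more delicate: vertices in $V_0$ may have length as small as one, so the occurrence-counting argument would lose an extra factor of $k$ and is unusable. Here I would instead invoke the niceness of $P$: between any two consecutive bags exactly one vertex is introduced or forgotten, and each vertex of $G$ is introduced at a unique bag. A vertex of $Y_r$ cannot appear in any bag outside the window, so it must be introduced at one of the $2L - 1$ bags inside it; since distinct vertices have distinct introduction bags, $|Y_r| \leq 2L - 1$, and the defining bound $L < 2k$ for $V_0$ then yields $|Y_r| < 4k$.

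The main conceptual obstacle is precisely this dichotomy between $i = 0$ and $i \geq 1$: the length lower bound $k \cdot 2^i$ on which the length-counting argument pivots degenerates at $i = 0$, forcing a switch from counting occurrences to counting introduce-nodes. Both cases ultimately rely on the same structural fact — vertices of $Y_r$ are trapped in a window whose length is controlled by $L$ — but they extract the bound on $|Y_r|$ from different features of the nice path decomposition.
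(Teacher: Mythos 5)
Your proposal is correct and follows essentially the same route as the paper's proof: bounding $|X_r|$ trivially by the bag size, bounding $|Y_r|$ for $i \ge 1$ by counting vertex-bag incidences in the window against the lower bound $\ell(v) \ge L/2$, and handling $i = 0$ via niceness by counting introduce nodes among the $2L-1$ bags. No gaps.
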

\begin{proof}[Proof of claim.]
For the set $X_r$, there is nothing to prove since each bag contains at most $k$ vertices. 
Let us consider the set $Y_r$.
First, we observe that because vertices of $Y_r$ occur only in the bags $B_{2L(r-1)+1}, \ldots, B_{2Lr-1}$, we have that 
\begin{equation}
\label{eq:sumlenclaim}
\sum_{v \in Y_r} \ell(v) \le k \cdot 2L,
\end{equation}
by the argument that each bag can contribute to the length of at most $k$ vertices.
Then, we consider two cases: $i>0$ and $i=0$.

In the case when $i>0$, we know that each vertex $v \in Y_r$ has length at least $\ell(v) \ge L/2$.
Together with~\cref{eq:sumlenclaim}, this implies that $|Y_r| \leq 4 k$. 

In the case when $i=0$, we have $L\leq 2k$, but we do not have the lower bound on the length of vertices in $V_0$. In this case, we use the property that $P$ is a nice path decomposition of $G$.
We know that the paths of vertices in $Y_r$ appear only in the bags $B_{2L(r-1)+1}, \ldots, B_{2Lr-1}$ of $P$.
There are $2L-1$ such bags, and because $P$ is nice, each bag either introduces a single vertex in $Y_r \cup X_r$ or forgets a single vertex in $X_{r-1} \cup Y_r$.
Since all vertices of $Y_r$ must be introduced in these bags, but there are only $2L-1$ such bags, this implies that $|Y_r| \leq 2 L-1 \le 4k$.
\end{proof}

Finally, notice that because there are no bags that contain vertices from both $Y_r$ and $Y_{r'}$ for $r \neq r'$, there are no edges between $Y_r$ and $Y_{r'}$ for $r \neq r'$.
Also, since $X_{r-1}$ and $X_{r}$ have $2L -1$ bags between them and the maximum length of a vertex is $L$, it follows that no vertex from $X_{r-1}$ occurs in a bag together with a vertex in $X_r$, and therefore there are no edges between $X_r$ and $X_{r'}$ for $r \neq r'$.
Therefore, a union of independent sets in $Y_1, \ldots, Y_{\lceil 2n/(2L) \rceil}$ is an independent set in $Y$, and a union of independent sets in $X_1, \ldots, X_{\lfloor 2n/(2L) \rfloor}$ is an independent set in $X$.

As each graph $G[X_r]$ and $G[Y_r]$ has at most $4k$ vertices, we use the given $\frac{n}{f(n)}$-approx\-imation algorithm to $\frac{4k}{f(4k)}$-approximate maximum independent set in all of the graphs $G[X_r]$ and $G[Y_r]$.
We denote by $X^*$ the union of the results in the graphs $G[X_r]$ and by $Y^*$ the union of the results in the graphs $G[Y_r]$.
Note that by previous arguments, $\alpha(G[X]) = \sum_{r} \alpha(G[X_r])$ and $\alpha(G[Y]) = \sum_{r} \alpha(G[Y_r])$, and therefore $X^*$ is a $\frac{4k}{f(4k)}$-approximation for independent set in $G[X]$ and $Y^*$ is a $\frac{4k}{f(4k)}$-approximation for independent set in $G[Y]$.
Now, we observe that because $V_i = X \cup Y$, either $\alpha(G[X]) \ge \alpha(G[V_i])/2$ or $\alpha(G[Y]) \ge \alpha(G[V_i])/2$, and therefore the larger of $X^*$ and $Y^*$ is a $\frac{8k}{f(4k)}$-approximation for independent set in $G[V_i]$.
Note that $\frac{8k}{f(4k)} = O(k/f(k))$, which is the desired approximation ratio.
\end{proof}

\section{Approximation parameterized by treewidth}
\label{sec: tw} 
In this section, we finish the proof of \Cref{thm: main-intro}.
For the convenience of the reader, let us re-state \Cref{thm: main-intro} here.

\maintheoremrestate*

Throughout this section we will use $G$ to denote the input graph and $\tw$ to denote the width of the given tree decomposition of $G$.
We denote by $k = \tw+1$ the maximum size of a bag in the given tree decomposition.
Recall that $f(k) = \Theta(f(\tw))$.

Let $T$ be the given tree decomposition of $G$.
By \Cref{prop: leaf unique} we assume that $T$ is nice, and moreover that for each leaf node $t$ of $T$ there exists a vertex $v \in B_t$ that occurs only in the bag $B_t$.
Let $\opt$ denote the size of a maximum independent set in \(G\).
Similarly to the pathwidth case in \Cref{sec: pw}, by using \Cref{simple_n_k_IS} we can assume in the rest of this section that $\opt \geq \frac{n}{f(k)}$.

Let $\lset\subseteq V(T)$ be the set of all leaf nodes of $T$. If the number of leaf nodes is at least $|\lset| \geq \frac{\opt \cdot f(k)}{k}$, then the unique vertices in these leaf bags already give us an independent set with the desired approximation factor.
Therefore, in the rest of this section we will also assume that $|\lset| < \frac{\opt \cdot f(k)}{k}$.
With this assumption, we can invoke the following lemma with \(\ell = 2 f(k)\).

\begin{lemma}
\label{lemma: removing subtrees} 
There exists a set $X \subseteq V(G)$ of size $|X| \leq k \cdot \frac{|\lset|}{\ell}$ such that for each connected component of \(G - X\) there is a rooted tree decomposition of width at most \(k-1\) that has at most \(\ell\) leaf nodes.
Such a set $X$ and the tree decompositions of the components can be computed in polynomial time. 
\end{lemma}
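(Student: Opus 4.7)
The plan is to find a set $S \subseteq V(T)$ of bags of the given nice tree decomposition $T$, and to set $X = \bigcup_{t \in S} B_t$; this immediately yields $|X| \leq k|S|$. Every connected component $C$ of $G - X$ is contained in the vertex set of some connected subtree $T_C$ of the forest obtained from $T$ by removing the nodes in $S$, and intersecting each bag of $T_C$ with $V(C)$ gives a tree decomposition of $C$ of width at most $k-1$, which we root arbitrarily. The task therefore reduces to choosing $S$ so that (i) $|S| = O(|\lset|/\ell)$, where the hidden constant is absorbable into the threshold used inside the algorithm, and (ii) each connected component of $T$ minus $S$, viewed as a tree on its own, has at most $\ell$ leaf nodes.

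For (i) I would use a heavy-descent argument that exploits the binary structure of nice tree decompositions. While some subtree still contains more than $\ell$ leaves of $T$, walk down from its root to the child whose own subtree contains more leaves of $T$, stopping at the first node $u$ where that count drops to at most $\ell$. Because every internal node of $T$ has at most two children and we always descend to the heavier side, the subtree at $u$ still contains more than $\ell/2$ leaves of $T$; moreover $u$'s parent in $T$ is then forced to be a join node, which is useful in the bookkeeping of step (ii). Add $u$ to $S$, disregard its subtree from then on, and iterate on what remains. Each addition to $S$ consumes more than $\ell/2$ leaves of $T$, so $|S| = O(|\lset|/\ell)$ and hence $|X| = O(k \cdot |\lset|/\ell)$, matching the desired bound up to absolute constants.

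The delicate step is (ii). A cut-off subtree rooted just below some $u \in S$ contains at most $\ell$ original leaves of $T$, plus at most one extra leaf at its own root when that root is an introduce/forget node whose cut-off neighbor disappears; this does not change the asymptotic bound. The ``top'' component that remains after removing all cut subtrees is the trouble spot: each previous cut can contribute one new chain-end leaf in the top whenever the cut's $T$-parent is an introduce/forget node, potentially raising the top's leaf count to $\ell + O(|\lset|/\ell)$, which exceeds $\ell$ once $|\lset| \gtrsim \ell^2$. I would handle this by reapplying the same cutting procedure recursively to the top component; since each level of recursion shrinks the effective leaf count by roughly a factor of $\ell$, the total size of $S$ across all recursion levels still forms a geometric series bounded by $O(|\lset|/\ell)$, and every resulting component has at most $\ell$ leaves. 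The main technical obstacle is precisely this tight control of the top component: a single non-recursive pass of the heavy-descent procedure may leave too many chain-end leaves in the top, and bounding the total number of cuts once recursion is invoked is the part of the proof that requires the most care.
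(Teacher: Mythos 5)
Your construction is essentially the paper's: repeatedly pick a node low in the (nice) decomposition whose subtree holds roughly $\ell$ leaves, put its bag into $X$, detach that subtree, and reuse the restricted subtrees as decompositions of the chopped-off components. The paper's selection rule is slightly cleaner: it takes a \emph{lowest} node $t^*$ whose subtree contains at least $\ell$ leaves of the \emph{current} tree, so each child-subtree of $t^*$ automatically has fewer than $\ell$ leaves, each iteration consumes at least $\ell$ leaves, and the bound $|X|\le k\cdot|\lset|/\ell$ with constant $1$ falls out immediately; your heavy-descent variant only consumes more than $\ell/2$ leaves per cut, giving $|X|\le 2k|\lset|/\ell$, which, as you say, is harmless since the lemma is invoked with $\ell=2f(k)$ and one can simply take a larger constant multiple of $f(k)$.

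The part you flag as the main technical obstacle is, however, a non-issue, and you have already proved this yourself: you correctly observe that the stopping node $u$ must have a join node as its parent (if the parent had a single child, the parent's leaf count would equal the child's, so the descent could not stop there). But then that parent \emph{retains its other child} after the cut, so removing the subtree rooted at $u$ never turns any remaining node into a leaf. Consequently no ``chain-end'' leaves are ever created: every leaf of the current tree is an original leaf of $T$ throughout the process, the cut-off child-subtrees have at most $\ell$ leaves (your ``$+1$ extra leaf at the root'' is likewise moot), and the top component has at most $\ell$ leaves the moment the while-condition fails. Hence the recursive re-cutting of the top component, whose cut-counting you defer as requiring the most care, is unnecessary; a single pass already yields the statement. (Had new leaves really been possible, your accounting would also be incomplete for the chopped-off pieces, not just the top, since earlier chain-ends could sit anywhere inside a later cut subtree --- another reason the paper counts leaves of the current tree rather than of the original $T$.) With that simplification your argument is correct and matches the paper's proof up to the constant factor discussed above.
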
 
\begin{proof}

We prove the lemma constructively starting with \(X = \emptyset\) and the tree decomposition \(T\) of the entire graph. We also maintain a set of tree decompositions $\cset$ of connected components of $G-X$.
We iteratively remove vertices from the graph $G$ based on the structure of $T$ as follows.

Initially, we define the set of tree decompositions we will return as $\cset = \emptyset$, and we initially assign $T'=T$ as the tree decomposition from which we will ``chop off'' pieces with at most $\ell$ leaf nodes into $\cset$.
As long as $T'$ has more than $\ell$ leaves, let $t^*$ be a node of $T'$ such that there are at least \(\ell\) leaf nodes in the subtree $T^*$ of \(T'\) rooted at \(t^*\) and no descendant of \(t^*\) has the same property. We add the vertices of $B_{t^*}$ to \(X\) and delete them from the graph and all bags of \(T'\). This separates the vertices in the bags in $T^*$ from the vertices in the bags of the rest of $T'$ and since no descendant of \(t^*\) had more than $\ell$ leaves, all connected components of $T^* - t^*$ have at most $\ell$ leaves.

We remove $T^*$ from $T'$, and add all connected components of 
$T^* - t^*$
into $\cset$. This completes the iteration. 
When the process stops, vertices in $G$ are either deleted (because they belonged to $B_{t^*}$ in some iteration) or appear in some tree decomposition that was added to $\cset$. 

By construction, each connected component of \(G - X\) has a tree decomposition that is given by a connected component in \(\cset\).
Each of these has fewer than \(\ell\) leaves and did not increase in width compared to \(T\).

With these observations, the following claim finishes the proof of the lemma.

\begin{claim}
It holds that $|X| \leq k \cdot \frac{|\lset|}{\ell}$. 
\end{claim}
\begin{proof}[Proof of claim.]
In each iteration of the algorithm, the number of leaves of $T'$ decreases by at least $\ell$, because $T^*$ has more than $\ell$ leaves.
Hence, this process terminates after at most $\frac{|\lset|}{\ell}$ iterations. Each such iteration adds a subset of a bag of \(T\) to \(X\) (which contains at most $k$ vertices). Therefore, the total number of deleted vertices is at most $k \cdot \frac{|\lset|}{\ell}$. 
\end{proof}
\end{proof}

We then assume to have \(X\) as in the statement of \Cref{lemma: removing subtrees} with \(\ell = 2 \cdot f(k)\), and for each connected component $C$ of $G-X$ a tree decomposition $T^C$ of width at most $k-1$ with at most $2 f(k)$ leaves.
For a connected component \(C\) of \(G - X\), let $S_C$ denote a fixed maximum independent set in \(C\).
Since \(|X| \leq k \cdot \frac{|\lset|}{2 f(k)} \leq \opt/2\), we know that the sum of \(|S_C|\) over all connected components \(C\) of \(G - X\) is at least \(\opt/2\).

We can distinguish two cases for a single connected component \(C\) of \(G - X\) based on whether a majority of  \(S_C\) appears in bags of nodes of degree at least 3 in \(T^C\) or not.
Formally, let $Q$ denote the set of vertices that appear in the bags of nodes of degree at least 3 in $T^C$, i.e., $Q=\bigcup_{t \text{ has degree \(>2\) in } T^C} B_t$. For each component $C$ one of the following two alternatives holds: 
\begin{enumerate}
    \item \(|S_C \setminus Q| > |S_C|/2\), or
    \item \(|S_C \cap Q | \geq |S_C|/2\).
\end{enumerate}

For handling the first case we can observe an easy pathwidth bound for \(C - Q\), which allows us to apply \Cref{lem:pathwidth-main}.
\begin{lemma}
\label{obs:casepathwidth}
A path decomposition of $C - Q$ of width at most $k-1$ can be computed in polynomial time given the tree decomposition $T^C$.
\end{lemma}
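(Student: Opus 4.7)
The approach is to observe that vertices outside $Q$ can appear only in bags indexed by ``non-branching'' portions of $T^C$, so that after ignoring them we are essentially left with a disjoint union of paths that can be concatenated. Concretely, I would let $D \subseteq V(T^C)$ be the set of nodes of degree at least three, so that by definition $Q = \bigcup_{t \in D} B_t$. Deleting $D$ from $T^C$ leaves a forest $F$ in which every remaining node has degree at most two (its original degree in $T^C$ was at most two, and deletions can only decrease it), so each connected component of $F$ is a path.

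The key structural observation is that for each $v \in V(C) \setminus Q$, the subtree $T_v = \{t \in V(T^C) : v \in B_t\}$ is connected and disjoint from $D$: if some $t \in D$ lay in $T_v$, then $v \in B_t \subseteq Q$, contradicting $v \notin Q$. Hence $T_v$ is contained in a single connected component of $F$, i.e., in a single path. I would then build $P'$ by enumerating the maximal paths of $F$ and concatenating them end-to-end in an arbitrary order, replacing each original bag $B_t$ (for $t \notin D$) by $B_t \setminus Q$. The width is automatically at most $k-1$ because every new bag is a subset of an original one.

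The three path decomposition axioms are then routine to verify. Coverage of vertices and edges of $C - Q$ holds because any bag $B_t$ witnessing them must have $t \notin D$ (otherwise an endpoint would lie in $Q$), and such $t$ remains in $F$. The consecutiveness property follows from the structural observation above: each non-$Q$ vertex lives in exactly one of the concatenated pieces, and its bag indices there form a connected subpath. The construction is clearly polynomial-time in the size of $T^C$. I do not foresee any real obstacle; the only subtle point, namely that no non-$Q$ vertex may ``straddle'' two different concatenated path pieces, is precisely what the structural observation rules out.
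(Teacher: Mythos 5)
Your construction is exactly the one in the paper: delete the degree-at-least-3 nodes together with the $Q$-vertices from the bags, note that each remaining component is a path and that every vertex of $C-Q$ has its (connected) set of bags confined to one such path, and concatenate the paths in arbitrary order. The proposal is correct and just spells out the verification the paper leaves implicit.
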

\begin{proof}
    A path decomposition witnessing this can easily be obtained from $T^C$ by deleting all nodes with degree at least 3 as well as vertices in their bags from the decomposition, resulting in a disjoint union of paths all of whose bags are of size at most \(k\).
    These paths can be concatenated in arbitrary order.
\end{proof}

For the second case we next give a lemma that splits up each \(C[Q]\) into \(O(\log f(k))\) many disjoint subgraphs in which every connected component has at most \(O(k)\) vertices.
\begin{lemma}\label{lem:casejointree}
     \(C[Q]\) can be divided into \(\ell \le O(\log {f(k)})\) subgraphs \(H_1, \dotsc, H_\ell\), such that \(V(C[Q]) = \dot{\bigcup}_{i \in [\ell]} V(H_i)\), and for any \(i \in [\ell]\), each connected component of \(H_i\) has at most \(6k\) vertices.
    Such \(H_1, \dotsc, H_\ell\) can be computed in polynomial time.
\end{lemma}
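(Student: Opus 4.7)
The plan exploits the bound of at most $2f(k)$ leaves of $T^C$, which by standard degree counting for trees forces at most $O(f(k))$ nodes of degree $\geq 3$ (the ``branching nodes'') in $T^C$. Writing $\mathcal{B}$ for this set, I form a skeleton tree $T'$ on vertex set $\mathcal{B}$, connecting two branching nodes $a, b$ precisely when the unique $T^C$-path between them has no other branching node. Then $T'$ is a tree with $|V(T')| = O(f(k))$, easily computable from $T^C$ in polynomial time.

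I then compute a centroid decomposition of $T'$, assigning each $t \in \mathcal{B}$ a level $\ell(t) \in \{1,\ldots,m\}$ with $m = O(\log f(k))$, satisfying the centroid separation property: for any two distinct same-level nodes of $T'$, the unique $T'$-path between them passes through a node of strictly smaller level. For each $v \in V(C[Q])$, let $N^C(v) \subseteq V(T^C)$ be the set of nodes whose bag contains $v$, and define $B(v) := N^C(v) \cap \mathcal{B}$; since $v \in Q$, $B(v)$ is nonempty. The third tree-decomposition axiom implies $N^C(v)$ is connected in $T^C$, from which $B(v)$ is connected in $T'$, and the centroid separation property then forces $B(v)$ to contain a unique minimum-level element, which I call $\pi(v)$. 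I set $H_i := C[\{v \in V(C[Q]) : \ell(\pi(v)) = i\}]$, yielding an $O(\log f(k))$-piece partition of $V(C[Q])$.

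For the component size bound, suppose $u, v \in V(H_i)$ are adjacent in $C$ via a common bag $B_t$. Then $N^C(u) \cup N^C(v)$ is connected (both contain $t$), so $B(u) \cup B(v)$ is a connected subtree of $T'$ containing both same-level nodes $\pi(u)$ and $\pi(v)$. Its unique minimum-level node $t^\star$ must lie in $B(u) \cap B(v)$: if, say, $t^\star \in B(u) \setminus B(v)$, then $\ell(t^\star) \geq \ell(\pi(u)) = i$ by minimality of $\pi(u)$ within $B(u)$, so $\ell(t^\star) = i$, and the $T'$-path from $t^\star$ to $\pi(v)$ would stay inside $B(u)\cup B(v)$, whose every node has level $\geq i$ by minimality of $t^\star$, contradicting the centroid separation property applied to the distinct level-$i$ nodes $t^\star$ and $\pi(v)$. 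By symmetry $t^\star \in B(u) \cap B(v)$, and uniqueness of minima inside $B(u)$ and $B(v)$ gives $t^\star = \pi(u) = \pi(v)$, so $u, v \in B_{t^\star}$. Iterating this along adjacencies shows that every connected component of $H_i$ shares a single $\pi$-value $t^\star$ and is therefore contained in the bag $B_{t^\star}$, giving at most $k \leq 6k$ vertices.

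The main obstacle I expect is the case analysis above establishing that the minimum-level node of $B(u) \cup B(v)$ lies in the intersection $B(u) \cap B(v)$: this is precisely what allows propagating $\pi$-values across adjacencies and confining each $H_i$-component to a single bag. All remaining steps---contracting $T^C$ to $T'$, running centroid decomposition on a tree of size $O(f(k))$, and computing each $\pi(v)$---are clearly polynomial time.
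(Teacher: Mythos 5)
Your proof is correct, and it takes a genuinely different route from the paper. The paper handles this lemma by first contracting each path between two branching (degree-$\geq 3$) nodes of $T^C$ into a single new node whose bag is the union $B_u \cup B_v$, obtaining a tree decomposition of $C[Q]$ with $O(f(k))$ nodes and width at most $2k-1$; it then invokes the Bodlaender--Hagerup logarithmic-depth transformation (Lemma~\ref{thm:logdepthtrans}) to get depth $O(\log f(k))$ at width at most $6k-1$, and levels each vertex by the depth of the highest bag containing it, so that each component of a level sits inside one bag of the transformed decomposition (hence the bound $6k$). You instead stay with the original bags: you build the skeleton tree $T'$ on the $O(f(k))$ branching nodes, replace the depth-reduction step by a centroid decomposition of $T'$, and assign each vertex to the level of the unique minimum-level branching bag containing it; your key step---showing that for adjacent same-level vertices the minimum-level node of $B(u)\cup B(v)$ lies in $B(u)\cap B(v)$, forcing $\pi(u)=\pi(v)$---is the analogue of the paper's ``highest bag'' argument and confines each component of $H_i$ to a single original bag, which even yields the stronger bound $k\leq 6k$. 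The supporting claims you use are all sound: $T'$ is a tree, $B(v)$ and $B(u)\cup B(v)$ induce connected subtrees of $T'$ (since branching nodes of a connected subtree of $T^C$ form a connected subgraph of $T'$), and the centroid separation property gives uniqueness of the minimum-level element. In short, your approach buys a self-contained balancing argument (no appeal to Lemma~\ref{thm:logdepthtrans}) and a slightly better component-size bound, at the cost of a somewhat more delicate case analysis; the paper's approach is shorter because it outsources the balancing to a known black box.
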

\begin{proof}

Consider the tree decomposition $T^C$ of $C$ obtained according to \Cref{lemma: removing subtrees}.
In particular, such a tree has at most $2 f(k)$ leaves and therefore at most $2 f(k)-1$ nodes with degree at least 3.

We can replace each path between two nodes \(u\) and \(v\) of degree at least 3 in \(T^C\) by two edges incident to a shared new node whose bag consists of the union $B_u \cup B_v$ of the bags of \(u\) and \(v\).
In this way we obtain a tree decomposition of \(C[Q]\) with at most \(4 \cdot f(k)\) nodes and width at most \(2k - 1\).

For this tree decomposition we invoke \Cref{thm:logdepthtrans} to obtain a tree decomposition \(T^J\) of \(C[Q]\) with width at most \(6k - 1\) and depth \(\ell \in O(\log{f(k)})\).
    Now, we partition the vertices in $C[Q]$ into $H_1,\ldots, H_{\ell}$ where $H_i$ contains all vertices $v$ such that the distance between the root of $T^J$ and  the highest bag in which $v$ appears is exactly $i-1$.

    By definition all \(V(H_i)\) are pairwise disjoint and because \(T^J\) is a tree decomposition of \(C[Q]\), the union of all \(V(H_i)\) covers \(V(C[Q])\).
    Moreover each connected component of any \(H_i\) is by construction a subset of some bag of \(T_J\) and thus has at most \(6k\) vertices as desired.
\end{proof}

With the previous lemmas in hand, we are now ready to finish the proof of \Cref{thm: main-intro} as follows.
\begin{proof}[Proof of \Cref{thm: main-intro}]
    We begin by invoking \Cref{lemma: removing subtrees}.
    Let \(\cset\) be the set of connected components in \(G - X\).

    For the next few paragraphs consider an arbitrary but fixed single connected component \(C \in \cset\).
    We first use \Cref{obs:casepathwidth} to invoke \Cref{lem:pathwidth-main} on \(C - Q\) to obtain an independent set \(S^J_C\) in \(C - Q\) of size at least \(\Omega\left(\frac{f(k)}{k \cdot \log {f(k)}}\right) \cdot \alpha(C - Q)\).

    Independently we invoke \Cref{lem:casejointree} and on each of the returned graphs \(H_i\) the assumed \(\frac{n}{f(n)}\)-approximation for \(n\)-vertex graphs on each of its connected components.
    Due to their small component size for each \(H_i\) this results in an independent set \(S_{H_i}\) of size at least \(\Omega\left(\frac{f(k)}{k}\right) \cdot \alpha(H_i)\).
    Because the graphs \(H_i\) vertex-partition \(C[Q]\) and there are only \(O(\log{f(k)})\) many \(H_i\), returning an \(S_{H_i}\) with maximum size yields an \(O(k \log f(k)/f(k) )\)-approximate solution for Maximum Independent Set on \(C[Q]\).
    
    We know that either
    \begin{enumerate}
        \item \(\alpha(C - Q) \ge \alpha(C)/2\), or
        \item \(\alpha(C[Q]) \geq \alpha(C)/2\).
    \end{enumerate}
    Overall this implies that returning the larger of \(S^J_C\) and the maximum-size \(S_{H_i}\) yields an \(O\left(\frac{k \log{f(k)}}{f(k)}\right)\)-approximate solution for Maximum Independent Set on \(C\).
    We denote the returned independent set by \(S_C\).

    Our final output is the union of all \(S_C\).
    Because all \(C\) are pairwise independent, the union of \(S_C\) is an independent set in \(G\). 
    Moreover, because $\sum_{C \in \cset}\alpha(C) \geq \opt/2$ and because of the above approximation guarantee for each \(S_C\), we obtain the overall desired approximation guarantee.
\end{proof}

\section{Conclusion and open problems}
\label{sec:conc}
In this paper we essentially settled the polynomial time approximability of Maximum Independent Set when parameterized by treewidth.
The most relevant open problem is to extend our approach to give the improved time-approximation tradeoff result in Czumaj et al.~\cite{czumaj2005approximation}. The current best known algorithm gives an $r$-approximation in $2^{\tw/r} n^{O(1)}$ time. With fine-tuning of the parameters and using the recent exponential-time approximation result of Bansal et. al.~\cite{bansal2019new}, we believe our techniques could give an improved running time of $2^{o(\tw/r)}$ when $r$ is sufficiently high, e.g., $r = \log^{\Omega(1)} \tw$. 

For us, the most interesting question is perhaps when $r$ is tiny. Can we get a $2$-approximation algorithm that runs in time $2^{(1/2-\epsilon)\tw} n^{O(1)}$? Can we prove some concrete lower bound in this regime?  While the Gap-ETH lower bound $2^{\tw/{\sf poly}(r)}$ (for sufficiently large $r$) is immediate from~\cite{bansal2019new}, such techniques do not rule out anything when $r$ is a small constant.

A different possible direction for future research would be to formulate approximation algorithms in terms of treewidth only for the more general Maximum Weight Induced Subgraph problem studied by Czumaj et al.~\cite{czumaj2005approximation}.

\bibliography{ref}

\appendix

\end{document}